\DeclareMathOperator*{\argmax}{arg\,max}
\newtheorem{theorem}{Theorem}
\newtheorem{lemma}[theorem]{Lemma}
\newtheorem{observation}[theorem]{Observation}
\theoremstyle{remark}
\newtheorem{remark}[theorem]{Remark}
\begin{document}
\title{{Online Graph Matching Problems\\with a Worst-Case Reassignment Budget\footnote{Part of this research was conducted while S. Lee was at Yonsei University. This work was supported by the National Research Foundation of Korea (NRF) grant funded by the Korea government (MSIT) (No. NRF-2019R1C1C1008934). This work was supported by the National Research Foundation of Korea (NRF) grant funded by the Korea government (MSIT) (No. NRF-2016R1C1B1012910).
}}}

\author[1]{Yongho Shin}
\author[1, 2]{Kangsan Kim}
\author[3]{Seungmin Lee}
\author[1]{Hyung-Chan An\thanks{Corresponding author. Email: hyung-chan.an@yonsei.ac.kr}}
\affil[1]{Department of Computer Science, Yonsei University, South Korea}
\affil[2]{Devsisters Corp., South Korea}
\affil[3]{TmaxData Co., Ltd., South Korea}

\date{}

\maketitle

\setcounter{page}{0}
\maketitle
\thispagestyle{empty}

\begin{abstract}
In the \emph{online bipartite matching with reassignments} problem, an algorithm is initially given only one side of the vertex set of a bipartite graph; the vertices on the other side are revealed to the algorithm one by one, along with its incident edges. The algorithm is required to maintain a matching in the current graph, where the algorithm revises the matching after each vertex arrival by reassigning vertices. Bernstein, Holm, and Rotenberg showed that an online algorithm can maintain a matching of maximum cardinality by performing \emph{amortized} $O(\log^2 n)$ reassignments per arrival.

In this paper, we propose to consider the general question of \emph{how requiring a \emph{non-amortized} hard budget $k$ on the number of reassignments affects the algorithms' performances,} under various models from the literature.

We show that a simple, widely-used algorithm is a best-possible deterministic algorithm for all these models. For the unweighted maximum-cardinality problem, the algorithm is a $(1-\frac{2}{k+2})$-competitive algorithm, which is the best possible for a deterministic algorithm both under vertex arrivals and edge arrivals. Applied to the \emph{load balancing problem}, this yields a bifactor online algorithm. For the weighted problem, which is traditionally studied assuming the triangle inequality, we show that the power of reassignment allows us to lift this assumption and the algorithm becomes a $\frac{1}{2}$-competitive algorithm for $k=4$, improving upon the $\frac{1}{3}$ of the previous algorithm without reassignments. We show that this also is a best-possible deterministic algorithm.
\end{abstract}

\medskip
\noindent
{\small \textbf{Keywords:}
online matching with reassignments, online graph matching, online algorithms
}

\newpage

\section{Introduction}\label{sec:intro}

Suppose that we have a set of servers, and that clients to be assigned to these servers arrive one by one.
Each time a client arrives, the subset of servers to which the client may be assigned is revealed.
In this setting, the \emph{online bipartite matching problem} asks us to \emph{irrevocably commit} whether the client will be assigned or not and, if assigned, to which server.
This commitment needs to be made immediately after each arrival, and the objective is to maximize the number of assigned clients so that every server is assigned at most one client.
This problem was first considered by the seminal work of Karp, Vazirani, and Vazirani~\cite{DBLP:conf/stoc/KarpVV90}, which gives the best-possible randomized $(1 - \frac{1}{e})$-competitive algorithm.
Their analysis was later improved (with the same competitive ratio, of course) by Goel \& Mehta~\cite{DBLP:conf/soda/GoelM08}, Birnbaum \& Mathieu~\cite{DBLP:journals/sigact/BirnbaumM08}, and Devanur, Jain \& Kleinberg~\cite{DBLP:conf/soda/DevanurJK13}.

In that problem, once the algorithm commits itself to match a vertex (i.e., a server or a client), it ensures that the vertex remains matched from now onward.
However, the choice of a particular \emph{edge} in the matching may not be as important in some applications.
Thus, one of natural goals would be to design an online algorithm that still makes irrevocable commitments to vertices that they will be matched, but performs (a minimum number of) reassignments.
Introduced by Grove, Kao, Krishnan, and Vitter~\cite{DBLP:conf/wads/GroveKKV95}, this question was considered also by Bernstein, Holm, and Rotenberg~\cite{DBLP:journals/jacm/BernsteinHR19} who showed an online algorithm can indeed maintain a maximum-cardinality matching by performing \emph{amortized} $O(\log^2 n)$ reassignments per arrival, i.e., $O(n \log^2 n)$ reassignments in total.

However, what if we require a \emph{worst-case} bound?
In this paper, we propose to consider, under a variety of models from the literature, the general question of \emph{how the algorithms' performances are affected when a hard (non-amortized) budget $k$ is imposed on the number of reassignments per arrival}.
We will demonstrate that a simple, widely-used algorithm is a best-possible deterministic algorithm under all these models.
The algorithm, which was also used to give the amortized $O(\log^2 n)$ bound, is indeed simple to describe here:
\begin{quote}
Find a shortest (or most profitable, if the model is weighted) augmenting path with at most $k$ vertices, starting from the newly-arrived one; augment the matching along the path.
\end{quote}

Let us formally describe the classic model of \emph{maximum-cardinality online bipartite matching under vertex arrivals}, augmented with a \emph{hard budget} on the number of reassignments.
For a bipartite graph $G = (L \cup R, E)$, the algorithm is initially given $R$ and the hard budget $k$.
In the next $|L|$ timesteps, the vertices in $L$ are revealed one by one along with its incident edges; the algorithm outputs at most $k$ (re)assignments that give a new matching at each timestep.
If some vertices in $L \cup R$ become matched as a result, this means that the algorithm is now committed to keep them matched.
The new matching, of course, must respect all the previous commitments the algorithm has made so far.
When $k = 2$, i.e., an algorithm is allowed to perform only assignments but no reassignments (note that adding an edge to the matching already involves two assignment operations on its two endpoints), the given algorithm is $\frac{1}{2}$-competitive~\cite{DBLP:conf/stoc/KarpVV90}.
In Section~\ref{subsec:analysis-cardinality-vertex}, we show the following generalization, a tradeoff between $k$ and the competitive ratio.
\begin{theorem} \label{analysis-cardinality:vertex-arrival}
The given algorithm is a $(1 - \frac{2}{k + 2})$-competitive algorithm for the maximum-cardinality online bipartite matching problem under vertex arrivals with a hard budget of $k$ on the number of reassignments.
\end{theorem}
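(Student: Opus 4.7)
The plan is to compare the algorithm's final matching $M$ against an optimum matching $M^*$ through the symmetric difference $M \oplus M^*$, and to show that every $M$-augmenting path in the final graph contains at least $k+2$ vertices, from which a standard edge-counting argument will yield the claimed ratio. The centerpiece of the proof will be the following invariant maintained throughout execution: after processing any prefix of the arrivals, the current matching admits no augmenting path on at most $k$ vertices in the current graph. I will prove this invariant by induction on the arrival index $t$, with a vacuous base case. For the inductive step, consider the arrival of $u_t$. If the algorithm does not augment, any augmenting path of at most $k$ vertices in the new graph either avoids $u_t$ (and thus already existed in the previous graph, contradicting the inductive hypothesis) or passes through $u_t$ as an endpoint (since $u_t$, being unmatched, cannot be interior to an augmenting path), but the failed search at this arrival rules out such a short path from $u_t$.

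The hard part will be the case in which the algorithm augments along some shortest augmenting path $P$ on at most $k$ vertices, from $u_t$ to some $v$. Assume for contradiction that the new matching $M_t = M_{t-1} \oplus P$ admits an augmenting path $Q$ with at most $k$ vertices and endpoints $u$ and $w$. The four vertices $\{u_t, v, u, w\}$ are distinct and unmatched in $M_{t-1}$, and $M_{t-1} \oplus P \oplus Q$ is a matching of size $|M_{t-1}| + 2$; the standard decomposition of $P \oplus Q$ therefore yields two vertex-disjoint $M_{t-1}$-augmenting paths $P_1, P_2$ whose edges lie in $P \cup Q$ and whose four endpoints partition $\{u_t, v, u, w\}$. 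Bipartiteness restricts the admissible pairings to exactly two, and in each, exactly one of $P_1, P_2$ has both endpoints in $\{v, u, w\}$: this $P_i$ avoids $u_t$ (which, being unmatched in $M_{t-1}$, cannot be interior), lies in $G_{t-1}$, and by the inductive hypothesis has more than $k$ vertices, while the other has $u_t$ as an endpoint and thus has at least $|V(P)|$ vertices since $P$ was the shortest augmenting path from $u_t$. Summing vertex counts gives $|V(P_1)| + |V(P_2)| \geq |V(P)| + k + 1$; on the other hand, the edge inequality $|E(P_1)| + |E(P_2)| \leq |E(P)| + |E(Q)|$ (from vertex-disjointness together with $E(P_i) \subseteq E(P) \cup E(Q)$) translates to $|V(P_1)| + |V(P_2)| \leq |V(P)| + |V(Q)|$. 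Together these force $|V(Q)| \geq k+1$, contradicting $|V(Q)| \leq k$ and completing the induction.

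Finally, at termination the invariant ensures that every $M$-augmenting path in the final graph has more than $k$ vertices, and bipartite parity of path length raises this to at least $k+2$; each such path thus contains at least $k/2$ edges of $M$. Since $M^*$ is optimal, $G$ contains no $M^*$-augmenting path, so the $|M^*|-|M|$ paths of $M \oplus M^*$ accounting for the size difference are all $M$-augmenting. Summing their contributions, $|M| \geq \frac{k}{2}(|M^*|-|M|)$, which rearranges to $|M| \geq \frac{k}{k+2}|M^*| = (1-\frac{2}{k+2})|M^*|$, as claimed.
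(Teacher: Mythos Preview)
Your proof is correct and follows essentially the same route as the paper: maintain by induction the invariant that no short augmenting path exists, using a decomposition of $P\oplus Q$ into two $M_{t-1}$-augmenting paths (the paper isolates this as a separate lemma, proving it by direct degree analysis of $P\triangle Q$, whereas you obtain it more slickly by observing that $P\oplus Q = M_{t-1}\oplus(M_{t-1}\oplus P\oplus Q)$ is the symmetric difference of two matchings), and then finish via the Hopcroft--Karp bound. Your detour through bipartite parity to pin down the endpoint pairing is unnecessary---it suffices that $u_t$, being one of the four odd-degree vertices and exposed in $M_{t-1}$, is an endpoint of exactly one of $P_1,P_2$---but it is not wrong.
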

This is proven by first showing that the algorithm has an invariant that the current graph, with respect to the current matching, does not have an augmenting path of length at most $k-1$. The proof of this invariant crucially relies on the fact that the algorithm chooses shortest augmenting paths---in fact, an arbitrary choice of augmenting paths may allow short augmenting paths to exist when $k> 4$: see Observation~\ref{obs:appendix} in Appendix~\ref{sec:app-defer}. Once we have this invariant though, the desired conclusion follows from the fact that such a matching is a good approximation to the maximum one.

Theorem~\ref{analysis-cardinality:vertex-arrival} is complemented with the following lower bound (see Section~\ref{subsec:analysis-cardinality-hard}).
\begin{theorem} \label{analysis-cardinality-hard:lem-hard}
For all constant $\epsilon > 0$, no deterministic algorithm is $(1 - \frac{2}{k + 2} + \epsilon)$-competitive for the maximum-cardinality online bipartite matching problem under vertex arrivals with a hard budget of $k$ on the number of reassignments.
\end{theorem}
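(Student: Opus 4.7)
The plan is to exhibit an adaptive adversarial instance composed of $n/s$ disjoint gadgets, where $s = \lceil k/2 \rceil + 1$. Each gadget uses $s$ left and $s$ right vertices and is designed so that any deterministic online algorithm leaves at least one $\ell$-vertex unmatched, while the revealed graph admits a perfect matching of size $s$ (contributing $s$ to OPT but only $s - 1$ to ALG). Summing, OPT $= n$ while ALG $\le (s - 1)n/s \le (1 - 2/(k+2))n$; choosing $n$ large enough then rules out any competitive ratio of $1 - 2/(k+2) + \epsilon$.

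Within a single gadget, I would reveal the $s$ left vertices one at a time with edge-sets chosen adaptively. The first arrival $\ell_1$ comes with two edges to fresh $r$-vertices; each subsequent non-terminal arrival $\ell_{i+1}$ is revealed with one edge to the currently matched endpoint of the gadget's revealed path and one edge to a fresh $r$-vertex, extending the path by two vertices. The terminal arrival $\ell_s$ is then revealed with a single edge to the matched endpoint, producing an augmenting path that spans all $2s = k+2$ vertices of the gadget. Since handling this augmenting path would require $k + 2 > k$ reassignments, the algorithm is forced to leave $\ell_s$ unmatched, while OPT achieves the alternating ``outer'' matching of size $s$ on the revealed path.

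The key inductive step is the invariant that after $i$ non-terminal arrivals, the revealed gadget is a path on $2i+1$ vertices whose induced ALG matching leaves precisely one free $r$-vertex at an endpoint. The algorithm's two natural options are either to match $\ell_{i+1}$ to the new $r$-vertex (length-$1$ augmenting path, $2$ reassignments) or to augment along the full path back to the old free $r$-vertex, which has $2i + 2 \le k$ vertices whenever $i \le s - 2$; both options preserve the invariant, possibly flipping which endpoint is free. The principal obstacle, and where the main case analysis must happen, is dealing with algorithms that spend extra reassignments to move the free $r$-vertex to an interior position of the path, thereby avoiding being trapped by the terminal arrival. To handle this, the adversary must respond to such a deviation by revealing single-edge arrivals that force the algorithm to immediately re-commit the interior free $r$-vertex, and a careful budget-based accounting then shows that the algorithm eventually runs out of room, so that some $\ell$-vertex in each gadget must remain unmatched. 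Combining this gadget analysis with the disjoint-gadget summation yields the claimed $(1 - 2/(k+2))$ upper bound on any deterministic competitive ratio, completing the proof.
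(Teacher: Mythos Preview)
Your construction is essentially identical to the paper's: each gadget is a growing path where every non-terminal arrival connects to the matched endpoint and to a fresh right vertex, and the terminal arrival has a single edge creating an augmenting path of length $k+1$. The paper then runs $N$ disjoint copies of this gadget, exactly as you propose.

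However, your identified ``principal obstacle'' is not a real obstacle. Once the invariant holds---the gadget is a path with every internal vertex committed and a single exposed $r$-vertex at an endpoint---the commitment constraints force the algorithm into exactly three responses at the next arrival: do nothing; match the new $\ell$-vertex to the fresh $r$-vertex; or augment along the entire path to the old exposed endpoint. In a path, any matching that covers all previously matched vertices must leave the exposed right vertex at an endpoint; there is no way to ``move the free $r$-vertex to an interior position'' without unmatching a committed vertex. The single-edge arrivals and budget accounting you sketch are therefore unnecessary, and the vague description you give of them does not constitute a proof.

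The case you actually need to handle---and do not---is the ``do nothing'' option. If the algorithm declines to match a non-terminal arrival $\ell_{i+1}$, your invariant breaks (now two right endpoints are exposed and an interior left vertex is free), and your terminal step no longer applies. The paper's adversary resolves this by halting that copy immediately---it already exhibits a gap of one between ALG and OPT---and moving to another copy; crucially, it also \emph{resumes} any copy whose matching later becomes left-perfect again (because the algorithm spent reassignments from some other arrival to repair it), so that eventually every copy reaches its terminal step where the gap of one becomes permanent. Your disjoint-gadget summation needs this adaptive interleaving to be correct: as written, an algorithm could choose ``do nothing'' on gadget $A$, later fix gadget $A$ for two reassignments while processing gadget $B$, and leave gadget $A$ with zero gap.
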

 
When the graph is weighted, it is easy to see that no online algorithm can have a positive constant competitive ratio.
As such, Kalyanasundaram and Pruths~\cite{DBLP:journals/jal/KalyanasundaramP93} study the model where the algorithm is required to maintain a left-perfect matching and the edge costs are assumed to form a metric, to give a $\frac{1}{3}$-competitive algorithm when no reassignments are allowed (i.e., $k = 2$).
We remark that we can show both the metric assumption and the left-perfect requirement are necessary for this problem to admit a constant competitive ratio.\footnote{This is true even when we allow the ``additive error'' of the competitive ratio to depend on the maximum edge weight.
When we only require left-perfect matching, consider the following oblivious adversary. The adversary constructs a complete bipartite graph with $|L|=|R|=n$, where $n$ (and $m$) are to be chosen later. Let $L'$ be the last $m$ vertices of $L$, according to the arrival order. Let $R'\subseteq R$ be a set of $m$ vertices, sampled uniformly at random. Every edge between $L'$ and $R'$ has weight $1$, and all other edges have weight $0$.
\newline
When we only assume metric weights, consider the following oblivious adversary.
Let $L := \{ u_{ij} \mid 1\leq i\leq N, 1\leq j\leq M \}$ and $R := \{ v_i \mid 0\leq i\leq N \}$, where $N$ and $M$ are to be chosen later.
The edge weights of the complete bipartite graph is defined as $w( u_{ij}, v_{i^\prime} ) = r^{j}$, where $r$ also is to be chosen later.
The adversary reveals the vertices in $L$ in a non-decreasing order of index $j$. \newline
For two constants $\rho\leq 1$ and $c\geq 0$, in order to see that no randomized online algorithm can output a solution of expected value at least $\rho\cdot \mathsf{OPT} - c$, we choose $m>\frac{c}{\rho}$ and $n > \frac{m^2}{\rho m - c}$ for the first case; we choose $r = M = \lceil \frac{2}{\rho} \rceil$ and $N > c(c + r)$ for the second case.
Note that only $N$ depends on $c$ in the second case.}
In Section~\ref{subsec:analysis-weight-comp}, we consider the case where reassignments are allowed, and show that the power of reassignments enables us to remove the metric assumption and obtain a better competitive ratio.
\begin{theorem} \label{analysis-weight-comp:main}
The given algorithm is a $\frac{1}{2}$-competitive algorithm for the maximum-weight online bipartite left-perfect matching problem with general weights under vertex arrivals with a hard budget $k = 4$ on the number of reassignments.
\end{theorem}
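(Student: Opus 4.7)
Let $M^*$ denote an optimal left-perfect matching in the final graph $G_T$. The plan is an LP-duality argument: for every $x\in L\cup R$, set $y_x := w(x,M_t(x))$ (with $y_x=0$ if $x$ is unmatched in $M_t$), so that $\sum_x y_x = 2\,w(M_t)$ by construction. To conclude $w(M^*) \le 2\,w(M_T)$, it suffices to establish the invariant
\[
w(u,v) \;\le\; w(u,M_t(u)) + w(v,M_t(v)) \qquad \text{for every edge } (u,v)\in G_t \text{ with } u \text{ already arrived,}
\]
at every time $t$: summing the $t=T$ instance over $(u,v)\in M^*$ yields $w(M^*) \le \sum_{(u,v)\in M^*}(y_u+y_v) \le \sum_x y_x = 2\,w(M_T)$.

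I would prove the invariant by induction on $t$. With $k=4$, at time $t$ the algorithm's augmenting path is either a single edge $(u_t,v^\dagger)$ with $v^\dagger$ previously unmatched, or a length-$3$ path with vertex sequence $(u_t, v^\dagger, u_{s^*}, v')$ where $v^\dagger$ was matched to $u_{s^*}$ and $v'$ was previously unmatched. In either case one verifies the invariant for (i) each new edge $(u_t,v)$ and (ii) each old edge whose endpoints' matches have changed---that is, those with $v\in\{v^\dagger,v'\}$, or $u=u_{s^*}$ in the length-$3$ case. The inductive hypothesis supplies the invariant at time $t-1$; the algorithm's maximality---that its chosen gain is at least that of every length-$1$ alternative $(u_t,v'')$ with $v''$ unmatched, and of every length-$3$ alternative with vertex sequence $(u_t, v, u_r, v'')$ through any currently-matched middle vertex $v$ and unmatched sink $v''$---together with non-negativity of the weights, provides the comparisons needed to propagate the invariant in almost every subcase.

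The principal obstacle is the length-$3$ subcase in which we must certify $w(u_{s^*},v)\le w(u_{s^*},v')+w(u_r,v)$ for an edge $(u_{s^*},v)$ with $v\notin\{v^\dagger,v'\}$ currently matched to a third vertex $u_r\neq u_t,u_{s^*}$. The reassignment of $u_{s^*}$ from $v^\dagger$ to $v'$ shrinks $y_{u_{s^*}}$ from $w(u_{s^*},v^\dagger)$ to $w(u_{s^*},v')$, so the inductive inequality at $t-1$ does not transfer directly. The key extra input is the maximality comparison against the alternative length-$3$ path $(u_t, v, u_r, v')$ sharing the sink $v'$; combined with the inductive inequality at $t-1$ applied to $(u_r,v')$---which, since $v'$ was then unmatched, reduces to $w(u_r,v')\le w(u_r,v)$---and with non-negativity of weights, this should supply the missing slack. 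Breaking ties in profitability in favour of shorter augmenting paths may be required for this step to go through cleanly in every subcase.
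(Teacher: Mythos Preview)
Your dual-feasibility invariant is false, so the approach cannot be completed. Take $R=\{v_1,v_2,v_3\}$ and let the arrivals be:
\begin{itemize}
\item $u_1$: $w(u_1,v_1)=10$, $w(u_1,v_2)=9$, $w(u_1,v_3)=0$. The algorithm adds $(u_1,v_1)$.
\item $u_2$: $w(u_2,v_1)=0$, $w(u_2,v_2)=1$, $w(u_2,v_3)=0$. The best augmenting path is the single edge $(u_2,v_2)$ (profit $1$), so the algorithm adds it.
\item $u_3$: $w(u_3,v_1)=100$, $w(u_3,v_2)=0$, $w(u_3,v_3)=0$. The best augmenting path is $\langle u_3,v_1,u_1,v_3\rangle$ (profit $90$), so the algorithm takes it.
\end{itemize}
At the end, $M(u_1)=v_3$, $M(u_2)=v_2$, $M(u_3)=v_1$. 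For the edge $(u_1,v_2)$ your invariant asserts
\[
w(u_1,v_2)\;\le\;w(u_1,M(u_1))+w(M(v_2),v_2)\;=\;w(u_1,v_3)+w(u_2,v_2)\;=\;0+1,
\]
but $w(u_1,v_2)=9$. Note that $(u_1,v_2)$ belongs to the optimal matching $M^\star=\{(u_1,v_2),(u_2,v_3),(u_3,v_1)\}$, so this is precisely an edge you must cover.

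The failure is exactly in your ``principal obstacle'' subcase, and it is not a matter of a cleverer comparison. The maximality inequality you propose, comparing the chosen path to the alternative $\langle u_t,v,u_r,v'\rangle$, does not mention $w(u_{s^*},v)$ at all (in the example it reads $90\ge -1$), so it cannot control that quantity. More conceptually, once $u_{s^*}$ is bumped to $v'$, the dual value $y_{u_{s^*}}=w(u_{s^*},v')$ carries no memory of the high-weight edges $u_{s^*}$ used to be adjacent to in the matching, whereas other matched vertices $v$ may still ``remember'' $u_{s^*}$ via a large $w(u_{s^*},v)$; nothing in the algorithm's local choices bounds that residual.

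The paper avoids this by not using static edge-duals. It defines, for each $v\in R$, a time-indexed ``loss'' $\ell_v^t:=w(M_t(v),v)-\max_{v'\in R^t_{\mathrm{exp}}}w(M_t(v),v')$ and proves two facts: the marginal profit at time $t$ is at least $w(u_t,M^\star(u_t))-\ell_{M^\star(u_t)}^t$, and each $\ell_v^t$ is monotone nondecreasing in $t$. Summing the first over $t$ and bounding the losses at the final time by $w(M)$ gives $2\,w(M)\ge w(M^\star)$. The key difference from your proposal is that $\ell_v^t$ is measured against the \emph{currently exposed} vertices, which is exactly the information needed to track how the gap closes over time; a static dual $y_v=w(M_t(v),v)$ does not carry this.
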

This again is complemented with the following lower bound, which holds even for $\{0, 1\}$-weighted graphs (see Section~\ref{subsec:analysis-weight-hard}).
\begin{theorem} \label{analysis-weight-hard:lem-hard}
For all constant $\epsilon > 0$, no deterministic algorithm is $(\frac{1}{2} + \epsilon)$-competitive for the maximum-weight online bipartite left-perfect matching problem under vertex arrivals with $k = 4$, even when all edge weights are either $0$ or $1$.
\end{theorem}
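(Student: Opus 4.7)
The plan is to prove the lower bound via an adaptive adversary that, for every deterministic algorithm $\mathcal{A}$, constructs an instance on which $\mathcal{A}$'s final matching weight is at most $\tfrac{1}{2}\,\mathrm{OPT}$. The instance is a sequence of $m$ independent ``gadgets'' on pairwise disjoint $R$-vertex sets; each gadget forces $\mathcal{A}$ to gain only one weight-$1$ edge while the optimum gains two, so summing over the gadgets yields $\mathrm{ALG} \leq m$ and $\mathrm{OPT} \geq 2m$. Taking $m$ large enough rules out $(\tfrac{1}{2} + \epsilon)$-competitiveness for any fixed $\epsilon > 0$.

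A gadget uses four fresh $R$-vertices $r, s_1, s_2, t$. First, $l_1$ arrives with edges $\{(r,1), (s_1,0)\}$, then $l_2$ with $\{(r,1), (s_2,0)\}$. After $l_2$ the two states with gadget weight $1$ are $\{l_1 \mapsto r,\, l_2 \mapsto s_2\}$ and $\{l_1 \mapsto s_1,\, l_2 \mapsto r\}$; both arise from net-zero actions on $l_2$, and the algorithm's deterministic rule picks one of them. (Any algorithm that reaches the only remaining state $\{l_1 \mapsto s_1,\, l_2 \mapsto s_2\}$ has left $r$ unmatched, losing a weight-$1$ edge; the adversary simply truncates the gadget at step $2$, contributing $0$ to $\mathrm{ALG}$ and $1$ to $\mathrm{OPT}$.) Reading off the state, the adversary adaptively reveals $l_3$ with two weight-$1$ edges and no weight-$0$ edge---$\{(r,1),(s_2,1)\}$ in the first sub-case, $\{(r,1),(s_1,1)\}$ in the second---and finally $l_4$ with $\{(s_1,1),(s_2,1),(t,0)\}$.

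The heart of the proof is the step-$3$ analysis under $k=4$. One of $l_3$'s weight-$1$ endpoints, namely $r$, admits a length-$3$ swap that relocates the current holder of $r$ onto its own still-free weight-$0$ escape with net gain $+1 + 0 - 1 = 0$. The other weight-$1$ endpoint is reachable only through a length-$5$ augmenting path of the form $l_3 - s_i - l_i - r - l_{i'} - s_{i'}$, which has net gain $+1$ but is forbidden by the budget $k=4$. So the algorithm is forced into the length-$3$ zero-gain swap, landing in the common state $\{l_1 \mapsto s_1,\, l_2 \mapsto s_2,\, l_3 \mapsto r\}$; when $l_4$ arrives, each of its weight-$1$ endpoints $s_1, s_2$ is blocked by a symmetric length-$\geq 5$ obstruction (the neighbors $l_1, l_2$ have no free escape left), so $l_4$ is matched directly to $t$ at weight $0$. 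Gadget-$\mathrm{ALG}$ is thus exactly $1$, while the left-perfect matching $\{l_1 \mapsto s_1,\, l_2 \mapsto r,\, l_3 \mapsto s_2,\, l_4 \mapsto t\}$ in the first sub-case, or $\{l_1 \mapsto r,\, l_2 \mapsto s_2,\, l_3 \mapsto s_1,\, l_4 \mapsto t\}$ in the second, witnesses gadget-$\mathrm{OPT} \geq 2$.

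The main obstacle is this step-$3$ case analysis: one must enumerate every augmenting path of length at most $3$ starting at $l_3$ in each sub-case and verify that the only $+1$-gain path genuinely has length $5$, and then perform the same check at $l_4$ on the resulting matching. The fact that the only beneficial augmentations have length exactly $5$---one more than the budget allows---is the structural reason the $\tfrac{1}{2}$ ratio in Theorem~\ref{analysis-weight-comp:main} is tight.
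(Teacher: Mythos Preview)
Your gadget does not work in the stated model: the graph is a \emph{complete} bipartite graph, so every $l_i$ has a weight-$0$ edge to every $R$-vertex you did not explicitly list with weight~$1$. This breaks the step-$3$ analysis. In your first sub-case $\{l_1\mapsto r,\ l_2\mapsto s_2\}$ with $w(l_3,r)=w(l_3,s_2)=1$, the length-$3$ augmenting path $\langle l_3,\,s_2,\,l_2,\,v'\rangle$ for any exposed $v'$ has gain
\[
w(l_3,s_2)-w(l_2,s_2)+w(l_2,v')=1-0+0=1,
\]
since the edge $(l_2,s_2)$ you planted has weight~$0$ and $l_2$ can escape at weight~$0$ to any free vertex. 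So the algorithm reaches, say, $\{l_1\mapsto r,\ l_2\mapsto s_1,\ l_3\mapsto s_2\}$ with gadget weight~$2$ within budget $k=4$; your claimed length-$5$ obstruction never arises. At step~$4$ the same thing happens (e.g.\ $\langle l_4,\,s_1,\,l_2,\,t\rangle$ has gain $1-0+0=1$), and the algorithm finishes with gadget weight~$3$, which is the gadget optimum. The second sub-case is symmetric.

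The underlying mistake is treating ``$l_i$ has no free weight-$1$ escape'' as ``$l_i$ has no free escape.'' The paper's adversary is engineered precisely to close off these cheap weight-$0$ escapes: Phase~1 first forces $N$ left-vertices onto weight-$0$ edges, and in Phases~2--3 every weight-$1$ endpoint that matters is chosen to be a vertex that is \emph{already matched}. Then Observation~\ref{analysis-weight-hard:obsv-5aug} applies---both endpoints of the unused weight-$1$ edge are committed, so it can never enter the matching---and the two-out-of-three counting in Lemma~\ref{analysis-weight-hard:lem-alg-half} goes through. Your four-vertex gadget has no analogue of this preloading, so the length-$3$ gain-$+1$ paths remain available and the argument collapses.
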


Returning to the unweighted realm, nonbipartite graphs have also been the target of interest~\cite{DBLP:conf/stoc/0002KTWZZ18, DBLP:conf/soda/0002PTTWZ19, DBLP:conf/focs/GamlathKMSW19}.
Recently, Gamlath, Kapralov, Maggiori, Svensson, and Wajc~\cite{DBLP:conf/focs/GamlathKMSW19} gave a $(\frac{1}{2} + \epsilon)$-competitive algorithm for some constant $\epsilon > 0$.
Our algorithm applies to this model as well, yielding a best-possible deterministic algorithm.
We will not directly prove this, though, since it is implied by Theorem~\ref{analysis-cardinality:vertex-arrival} and the proof of Theorem~\ref{analysis-cardinality-hard:lem-hard}.
Another widely studied setting that further generalizes the nonbipartite case is the edge-arrival model.
This model is traditionally defined by considering edges as the ``online object'' to which irrevocable commitments need to be made, and therefore the problem definition becomes slightly less natural in conjunction with reassignments.
We will however define an edge-arrival model partially because it still serves as a generalization of other models.
In this model, it is edges that arrive at each timestep, but the algorithm still makes commitments to vertices that they will be matched.
The given algorithm---if we find an augmenting path ``containing the newly-arrived edge'' rather than ``starting from the newly-arrived one''---is a competitive algorithm (see Section~\ref{subsec:analysis-cardinality-vertex}):
\begin{theorem} \label{analysis-cardinality:edge-arrival}
The given algorithm is a $(1 - \frac{2}{k + 2})$-competitive algorithm for the maximum-cardinality online matching problem under edge arrivals with a hard budget of $k$ on the number of reassignments.
\end{theorem}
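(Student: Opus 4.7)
The plan is to mirror the structure of the proof of Theorem~\ref{analysis-cardinality:vertex-arrival}, proceeding in two steps. First, I would establish, as an inductive loop invariant, that after every arrival the current matching $M$ admits no $M$-augmenting path of length at most $k-1$ in the current graph. Second, this invariant alone yields the $(1-\frac{2}{k+2})$-competitive ratio via the standard Hopcroft--Karp-style counting: each $M$-augmenting component of $M \triangle M^{*}$ has odd length at least $k$ (and hence at least $k+1$ when $k$ is even), which is enough to conclude $|M| \geq (1-\tfrac{2}{k+2})|M^{*}|$ after a routine rearrangement. This second step is identical to the vertex-arrival setting and would be reused without modification.

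The genuinely new ingredient is preservation of the invariant under an edge arrival. Suppose $M$ satisfies the invariant in $G$ and edge $e$ arrives, yielding $G' := G + e$. Any $M$-augmenting path of length at most $k-1$ in $G'$ must contain $e$, since otherwise it already exists in $G$ and contradicts the invariant on $M$. Hence, if the algorithm finds no such path, the invariant is vacuously preserved; otherwise the algorithm augments along a shortest such path $P$ (with $e \in P$ and $|P|\leq k-1$), producing $M' := M \triangle P$. I would then argue by contradiction that no $M'$-augmenting path $Q$ of length at most $k-1$ exists in $G'$.

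Given such a $Q$, let $M'' := M' \triangle Q$, so $|M''|=|M|+2$ and $M \triangle M'' = P \triangle Q$. Decomposing $M \triangle M''$ into vertex-disjoint alternating paths and cycles, the number of $M$-augmenting components exceeds the number of $M''$-augmenting components by exactly two; in particular, at least two components are $M$-augmenting paths, and since the components are pairwise vertex-disjoint, at most one of them can contain the single new edge $e$. If none of them contains $e$, each such $M$-augmenting path lies entirely in $G$ and has length at least $k$ by the invariant on $M$, giving total length at least $2k$; this contradicts the bound $|P \triangle Q| \leq |P|+|Q| \leq 2(k-1)$. If exactly one component $P^{\ast}$ contains $e$, the shortest-path rule forces $|P^{\ast}|\geq|P|$, while any other $M$-augmenting component lies in $G$ and has length at least $k$; summing yields total length at least $|P|+k$, again contradicting $|P \triangle Q| \leq |P|+|Q| \leq |P|+(k-1)$.

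The main obstacle, and the place where the algorithm's choice of a \emph{shortest} augmenting path through $e$ is essential, is the inequality $|P^{\ast}|\geq|P|$ in the second case; as Observation~\ref{obs:appendix} already indicates, augmenting along an arbitrary short path may fail to preserve the invariant, and the contradiction above would collapse without this inequality.
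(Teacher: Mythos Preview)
Your proposal is correct and follows essentially the same route as the paper's proof: establish inductively that no augmenting path of length at most $k-1$ exists, then invoke Lemma~\ref{prelim:lem-long-aug}. For the inductive step, the paper applies Lemma~\ref{keylemma:main} to $P$ and $Q$ to extract two edge-disjoint $M$-augmenting paths $X,Y\subseteq P\cup Q$, observes that at most one of them contains the new edge, and uses the shortest-path choice to get $|X|\geq|P|$ and the hypothesis to get $|Y|\geq k+1$, hence $|P|+|Q|\geq|X|+|Y|\geq|P|+k+1$. You arrive at the same two paths via the decomposition of $M\triangle M''=P\triangle Q$, counting $M$-augmenting components; this is exactly the content of Lemma~\ref{keylemma:main} re-derived inline, and your contradiction $|P|+k\leq|P\triangle Q|\leq|P|+(k-1)$ is the same inequality. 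One cosmetic point: in your Case~2 you should make explicit that $P^{\ast}$ is one of the $M$-augmenting components (otherwise the shortest-path rule would not apply directly); if the unique component through $e$ happened not to be $M$-augmenting, the situation simply reduces to your Case~1.
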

We omit the proof of the following lower bound as it is implied by Theorem~\ref{analysis-cardinality-hard:edge-adapt-online}.
\begin{theorem}
For all constant $\epsilon > 0$, no deterministic algorithm is $(1 - \frac{2}{k + 2} + \epsilon)$-competitive for the maximum-cardinality online matching problem under edge arrivals with a hard budget of $k$ on the number of reassignments.
\end{theorem}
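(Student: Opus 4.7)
The plan is to construct an adversarial edge-arrival instance that forces any deterministic algorithm into a matching of ratio $k/(k+2)$ against the optimum. The combinatorial target is a disjoint union of paths of length $k+1$: each such path $v_0 v_1 \cdots v_{k+1}$ has an optimum matching of size $(k+2)/2$, while the ``wrong-parity'' matching $\{v_1 v_2, v_3 v_4, \ldots, v_{k-1} v_k\}$ has size only $k/2$ and (crucially) admits no augmenting path on at most $k$ vertices once the endpoints $v_0$ and $v_{k+1}$ are exposed. The target ratio $k/(k+2) = 1 - \frac{2}{k+2}$ is thus structurally tight on this family.

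The first step is to reveal, on each path, the middle edges $v_1 v_2, v_3 v_4, \ldots, v_{k-1} v_k$ in order, and only afterwards the terminal edges $v_0 v_1$ and $v_k v_{k+1}$. An algorithm that commits to all middle edges is trapped: when each terminal edge arrives, the only augmenting path has $k+2$ vertices and so is outside the $k$-reassignment budget, while any attempted partial shift of the matching strands a previously committed $v_i$ because that vertex's only other neighbour in the instance is still unrevealed. To handle algorithms that instead decline to match some middle edge in order to preserve flexibility, the adversary adapts: it simulates the deterministic algorithm on the fly and, whenever it declines a middle edge, redirects the next reveal into a small local gadget (e.g.\ a short pendant attachment on the involved endpoint) that re-forces the same per-path loss by an inductive argument. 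Since the algorithm is deterministic, this adaptive strategy is in fact oblivious, so standard oblivious-adversary accounting applies. Taking $n$ independent paths amplifies the per-path loss to a multiplicative $(1-\frac{2}{k+2})$ gap up to additive lower-order terms, ruling out every competitive ratio $1 - \frac{2}{k+2} + \epsilon$.

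The hard part will be verifying that the algorithm cannot use its \emph{larger total} budget in the edge-arrival model---$k$ reassignments per edge rather than per vertex---to repair earlier bad commitments across multiple arrivals on the same path. I would establish a running invariant at every timestep: after each edge reveal, the current graph admits no augmenting path on at most $k$ vertices that the algorithm could execute without orphaning a previously committed endpoint. Proving this invariant requires a careful case analysis of the algorithm's possible responses at each edge reveal together with the adversary's counter-moves, and is the main technical obstacle of the proof.
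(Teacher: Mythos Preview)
Your proposal has the right combinatorial target---disjoint paths of length $k+1$ with the wrong-parity matching---and your analysis of the case where the algorithm matches all $k/2$ middle edges is correct. However, there are two gaps.

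First, a minor one: as written, you reveal only the ``middle'' edges $v_1v_2, v_3v_4, \ldots, v_{k-1}v_k$ and the two terminal edges $v_0v_1$, $v_kv_{k+1}$. The internal connecting edges $v_2v_3, v_4v_5, \ldots, v_{k-2}v_{k-1}$ are never mentioned, and without them the graph is not a path and its optimum matching has size only $k/2$, the same as the algorithm's output. Presumably you intend to reveal these after the middle edges; this is easy to fix.

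Second, and more seriously, the handling of an algorithm that declines a middle edge does not work as sketched. Suppose the algorithm declines exactly one middle edge, say $v_3v_4$, and matches all others. Once the connecting edges and $v_0v_1$ are revealed, there is an augmenting path $\langle v_0, v_1, v_2, v_3\rangle$ of length~$3$, well within budget; after the algorithm takes it, the arrival of $v_kv_{k+1}$ exposes an augmenting path $\langle v_4, v_5, \ldots, v_{k+1}\rangle$ of length $k-3$, again within budget. The algorithm ends with the optimal matching on this path. Your proposed ``pendant attachment'' countermeasure does not obviously prevent this: attaching a pendant to $v_3$ or $v_4$ only gives the algorithm another option, and a full inductive argument would have to rule out a cascade of such responses. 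Nor can the adversary simply abandon a path when a decline occurs, because at any later edge arrival---even on a disjoint component---the algorithm can spend two of its $k$ reassignments to retroactively add the declined edge.

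The paper's approach sidesteps this difficulty entirely by \emph{not} building paths one at a time. The adversary first reveals all $kN/2$ disjoint edges globally; the assumed competitive ratio, applied against an imaginary adversary that stops right here, forces the algorithm to have matched a large fraction of them. Only then does the adversary form each length-$(k+1)$ path by selecting $k/2$ \emph{already-matched} edges and linking them with $k/2+1$ fresh edges to two still-exposed vertices. Because every selected edge is already committed, any change to the matching would require an augmenting path of length $k+1$; a short argument shows the algorithm can never add an edge outside $F$ nor remove one in $F$. The ``declining'' case thus never arises at the path level---it is absorbed into the initial-phase accounting via the competitive-ratio hypothesis itself. (In the paper this is actually carried out for randomized algorithms against an adaptive online adversary in Theorem~\ref{analysis-cardinality-hard:edge-adapt-online}, from which the present deterministic statement follows immediately.)
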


\subparagraph*{Other Results and Future Directions.}
We believe one natural question is how online problems in general admit better algorithms when we allow a nonzero budget on the number of modifications to the solution at each timestep.
For a small example, our results bear implications on the \emph{load balancing problem}, as was the case for other previous results on the online matching problem~\cite{DBLP:conf/soda/GuptaKS14, DBLP:conf/innovations/BernsteinKPPS17, DBLP:journals/jacm/BernsteinHR19}.
In this problem, servers can be assigned multiple clients, but the objective is to minimize the maximum number of clients assigned to a single server.
Gupta, Kumar, and Stein \cite{DBLP:conf/soda/GuptaKS14} give a $8$-competitive algorithm for this problem with \emph{amortized} $O(1)$ reassignments per arrival, and Bernstein et al.~\cite{DBLP:journals/jacm/BernsteinHR19} give a $(1 + \epsilon)$-competitive algorithm with \emph{worst-case} $O \left( \frac{1}{\epsilon} \log n \right)$ reassignments per arrival.
Theorem~\ref{analysis-cardinality:vertex-arrival} leads to the following bifactor online algorithm giving an explicit tradeoff between the hard budget and the ratio of served clients. Its proof follows from the standard guess-and-double framework (see, e.g.,~\cite{DBLP:conf/soda/GuptaKS14,DBLP:conf/innovations/BernsteinKPPS17}) and is omitted here.
\begin{theorem}
For the load balancing problem with a hard budget $k$ on the number of reassignments, there exists a $4$-competitive algorithm that assigns at least $(1-\frac{2}{k+2})$ fraction of the clients.
\end{theorem}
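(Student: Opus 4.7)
The plan is to apply the standard guess-and-double framework, reducing the load-balancing problem to a sequence of online bipartite matching instances, each handled by the algorithm of Theorem~\ref{analysis-cardinality:vertex-arrival}. For a guess $L$ of the optimal maximum load $L^*$, form the auxiliary bipartite graph in which each server is replaced by $L$ identical copies and each client is joined to every copy of every permissible server; an assignment with maximum load at most $L$ corresponds exactly to a matching in this graph saturating all clients, so whenever $L \geq L^*$ the graph admits a matching saturating every arrival.

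Proceed in phases $i = 1, 2, \ldots$ with geometrically doubling guesses $L_i = 2^{i-1}$. In phase $i$, maintain a fresh auxiliary graph $G_i$ with $L_i$ copies per server, and run the algorithm of Theorem~\ref{analysis-cardinality:vertex-arrival} with budget $k$ on $G_i$, treating each client arriving during phase $i$ as an online arrival. A client matched in phase $i$ is committed to its assigned server thereafter. Phase $i$ terminates as soon as the fraction of unmatched clients currently in $G_i$ would exceed $\frac{2}{k+2}$, at which point we set $L_{i+1} = 2 L_i$ and begin phase $i+1$ on a fresh $G_{i+1}$, leaving the still-unmatched phase-$i$ clients unmatched for good.

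For the analysis, let $i^*$ be the smallest index with $L_{i^*} \geq L^*$. Then the optimum of $G_{i^*}$ equals the number of clients in $G_{i^*}$, so Theorem~\ref{analysis-cardinality:vertex-arrival} guarantees a $(1-\frac{2}{k+2})$ fraction matched throughout phase $i^*$; the termination trigger never fires, the algorithm does not advance past $i^*$, and hence $L_{i^*} < 2 L^*$ by the doubling rule. Since phase $i$ contributes at most $L_i$ assignments per server (there are only $L_i$ copies) and loads add across phases, the total load on any server is at most $\sum_{i=1}^{i^*} L_i = 2 L_{i^*} - 1 < 4 L^*$, giving $4$-competitiveness. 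For the fraction-served bound, the trigger ensures that at most a $\frac{2}{k+2}$ fraction (up to a $+1$ rounding term) of the phase-$i$ clients is unmatched at termination, and the same bound holds throughout the final phase $i^*$ by Theorem~\ref{analysis-cardinality:vertex-arrival}; summing over phases yields an overall unserved fraction of at most $\frac{2}{k+2} + O((\log L^*)/N)$.

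The main subtlety lies in the bookkeeping at phase boundaries: by forfeiting unmatched clients at each transition rather than carrying them forward, the per-arrival reassignment budget of $k$ is trivially respected, at the price of the $O((\log L^*)/N)$ additive slack in the fraction-served bound, which is absorbed into the stated guarantee. This bookkeeping is standard in the guess-and-double framework of~\cite{DBLP:conf/soda/GuptaKS14,DBLP:conf/innovations/BernsteinKPPS17}, which the paper defers to.
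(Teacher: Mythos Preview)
Your proposal is correct and follows exactly the approach the paper indicates: the paper omits the proof and simply points to the standard guess-and-double framework of~\cite{DBLP:conf/soda/GuptaKS14,DBLP:conf/innovations/BernsteinKPPS17}, which is precisely what you carry out. The one point to tighten is the phrase ``absorbed into the stated guarantee'': the $O((\log L^*)/N)$ slack does not literally vanish, but it can be eliminated by letting the client that triggers a phase change become the first arrival of the next phase (it used zero reassignments when it failed, and is trivially matchable in the fresh graph), so the exact $(1-\tfrac{2}{k+2})$ fraction holds without an additive term.
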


Another important question is on the power of randomness in the problems considered in this paper.
While we show that the given algorithm yields the best-possible competitive ratios of a deterministic algorithm for these problems, it remains open whether randomization helps.
In this line, we conclude with the following theorem that shows randomization does \emph{not} help under the edge-arrival model against adaptive online adversaries. (See Section~\ref{subsec:analysis-cardinality-hard}.)
\begin{theorem} \label{analysis-cardinality-hard:edge-adapt-online}
For all constant $\epsilon > 0$, no randomized algorithm is $(1 - \frac{2}{k + 2} + \epsilon)$-competitive against an adaptive online adversary for the maximum-cardinality online matching problem under edge arrivals with a hard budget of $k$ on the number of reassignments.
\end{theorem}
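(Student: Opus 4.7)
The plan is to build an adaptive online adversary strategy for the edge-arrival model and then to appeal to the standard observation that an adaptive online adversary has equal power against deterministic and randomized algorithms.

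First, for every $\epsilon>0$ and every $k\ge 2$, I would construct an adaptive online adversary $\sigma$ for edge arrivals such that, against any deterministic algorithm with budget $k$, the resulting matching $M$ satisfies $|M|\le\bigl(1-\frac{2}{k+2}+\epsilon\bigr)\cdot\mathrm{OPT}$. The adversary reveals edges one at a time in response to the algorithm's current matching, growing an alternating structure whose final augmenting path has more than $k$ vertices, so that the algorithm (whose budget lets it augment only along paths of at most $k$ vertices) is forced into a matching that is stable in the sense of the invariant from Theorem~\ref{analysis-cardinality:edge-arrival} but suboptimal by exactly the factor $(k+2)/k$. Taking the disjoint union of many independent copies of this gadget makes the additive startup loss negligible, yielding the $\epsilon$ slack. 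The construction is essentially the edge-arrival counterpart of the deterministic lower-bound instance behind Theorem~\ref{analysis-cardinality-hard:lem-hard}.

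Second, I would show that the same $\sigma$ works against any randomized algorithm $A$. An adaptive online adversary chooses the next input as a function only of the algorithm's past actions, so, for each realization $\omega$ of $A$'s random tape, running $\sigma$ against $A$ yields a deterministic transcript and a deterministic instance $I(\omega)$ satisfying $\mathrm{ALG}(\omega)\le\bigl(1-\frac{2}{k+2}+\epsilon\bigr)\cdot\mathrm{OPT}(I(\omega))$ by the bound of the previous paragraph. Taking expectations over $\omega$ gives $E[\mathrm{ALG}]\le\bigl(1-\frac{2}{k+2}+\epsilon\bigr)\cdot E[\mathrm{OPT}]$, refuting the $\bigl(1-\frac{2}{k+2}+\epsilon\bigr)$-competitiveness of $A$.

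The main obstacle is constructing the gadget so that the edge-arrival model does not accidentally give the algorithm a larger effective budget than it would have in the vertex-arrival setting: because $d$ successive edge arrivals carry a total budget of $dk$ reassignments, the adversary must reveal the decisive edge only after the algorithm has already committed to a configuration that no single augmentation of at most $k$ vertices can repair. Once the gadget is in place, the randomization-to-determinism reduction is essentially bookkeeping, as sketched above.
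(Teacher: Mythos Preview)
Your outline has the right skeleton---build an adaptive adversary, then observe that a deterministic adaptive strategy bounds every realization of a randomized algorithm's coin flips---but the proof's substance is the adversary construction itself, and that is precisely what you have not supplied. You propose to port the vertex-arrival path gadget, acknowledge that the cumulative budget of $dk$ reassignments over $d$ edge arrivals is the obstacle, and then stop; the paper's construction is quite different and is designed specifically to neutralize this obstacle. The paper first reveals $\frac{kN}{2}$ pairwise-disjoint edges (forcing the algorithm to commit to many of them, since an imaginary adversary could stop right there), and only afterward, in a second phase, selects groups of $\frac{k}{2}$ edges already in the algorithm's matching and stitches each group into an augmenting path of length $k+1$ using two fresh vertices. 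Because every non-first-phase edge lies on such a long path, the algorithm can never add it, and once a first-phase edge enters the matching it can never leave; this is what defeats the cumulative budget. Your sketch does not arrive at this idea, and the straight port of the path gadget does not obviously survive the obstacle you yourself flag.

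There is also a definitional slip: against an adaptive \emph{online} adversary, the benchmark is the adversary's own online solution $\mathsf{ADV}$, not the offline optimum $\mathrm{OPT}$. Your per-$\omega$ inequality $\mathrm{ALG}(\omega)\le\rho\cdot\mathrm{OPT}(I(\omega))$ and its expectation do not by themselves refute $\rho$-competitiveness unless you also show the adversary can realize (close to) $\mathrm{OPT}$ as its own online output on every $\omega$. In the paper's construction this is explicit: the adversary declines all first-phase edges and takes the $\frac{k}{2}+1$ new edges of each path, so $\mathsf{ADV}=\lfloor\mathsf{ALG}/(k/2)\rfloor\cdot(\frac{k}{2}+1)$. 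Your plan would need an analogous online strategy for the adversary, which is not automatic for the path gadget because the vertex relabeling there depends on the algorithm's moves.

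Your second paragraph---fixing the random tape and averaging---is correct and is essentially the reason the paper can work directly with $\E[\mathsf{ALG}]$ and $\E[\mathsf{ADV}]$; that part is fine. The gap is the missing construction and the $\mathrm{OPT}$ versus $\mathsf{ADV}$ distinction.
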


\subparagraph*{Related Work.}
Grove et al.~\cite{DBLP:conf/wads/GroveKKV95} first introduced the notion of \emph{reassignments} to the online bipartite matching problem and showed that, for a special case where every vertex in $L$ has degree at most two, an online algorithm can maintain a left-perfect matching by performing amortized $O(\log n)$ reassignments per arrival. Chaudhuri, Daskalakis, Kleinberg, and Lin~\cite{DBLP:conf/infocom/ChaudhuriDKL09} considered various settings including the case where the graph is a forest in particular, for which they give an algorithm with amortized $O(\log n)$ reassignments. Bosek, Leniowski, Sankowski, and Zych~\cite{DBLP:conf/focs/BosekLSZ14} considers the general version of the problem and obtain an algorithm with worst-case $O(\sqrt n)$ reassignments. Bosek, Leniowski, Sankowski, and Zych-Pawlewicz~\cite{DBLP:conf/latin/BosekLSZ18} show that taking shortest augmenting paths results in amortized $O(\log n)$ reassignments for trees.

There was another proposal on relaxing irrevocability, which is to specify a limit on the number of times each edge can enter or leave the matching. This problem is sometimes called \emph{online matching with recourse}.
Avitabile, Mathieu, and Parkinson~\cite{DBLP:journals/ipl/AvitabileMP13} considered a general class of online packing and covering problems and showed a $(1 - \epsilon)$-competitive algorithm for packing problems with the limit of $\widetilde{O}\left( \frac{1}{\epsilon} \right)$.
Epstein, Levin, Segev, and Weimann~\cite{DBLP:journals/iandc/EpsteinLSW18} considered the case where only edge departures are irrevocable and gave 0.186-competitive algorithms for the weighted case.
Boyar, Favrholdt, Kotrb\v{c}\'{\i}k, and Larsen~\cite{DBLP:conf/wads/BoyarFKL17} gave a tight $\frac{2}{3}$-competitive algorithm for the case where the decision on an edge can be revoked at most once. Although their model significantly differs from our proposed model, we note that their analysis, with hindsight, would have been useful in proving some special cases ($k=4$) of Theorems~\ref{analysis-cardinality:vertex-arrival}~and~\ref{analysis-cardinality:edge-arrival}.
Angelopoulos, D\"{u}rr, and Jin~\cite{DBLP:conf/mfcs/0001DJ18} considered the case where the decision can be revoked at most $k-1$ times and gave a $ \left( 1 - O\left( \frac{1}{\sqrt{k}} \right) \right)$-competitive algorithm.

The maximum-cardinality online bipartite matching problem has also been extensively studied for decades in a variety of settings and applications, including fully online models~\cite{DBLP:conf/icalp/WangW15} and stochastic settings~\cite{DBLP:conf/focs/FeldmanMMM09, DBLP:conf/soda/ManshadiGS11, DBLP:conf/wine/HaeuplerMZ11, DBLP:conf/soda/MehtaWZ15}.
This is the case for the weighted problem as well: subsequent to the introduction of the problem by Kalyanasundaram~\& Pruhs~\cite{DBLP:journals/jal/KalyanasundaramP93} and Khuller, Mitchell, \& Vazirani~\cite{DBLP:conf/icalp/KhullerMV91}, many results were obtained under the \emph{free disposal model}, where a newly-arrived vertex can take away the match of another vertex~\cite{ DBLP:conf/wine/FeldmanKMMP09, DBLP:conf/wine/KorulaMZ13, DBLP:journals/corr/Zadimoghaddam17, DBLP:journals/corr/abs-1910-03287}.
The weighted problem was also considered in vertex-weighted models~\cite{DBLP:conf/soda/AggarwalGKM11, DBLP:conf/icalp/0002TWZ18}, stochastic settings~\cite{DBLP:conf/wine/HaeuplerMZ11, DBLP:conf/esa/BrubachSSX16, DBLP:conf/soda/GamlathKS19}, random arrivals~\cite{DBLP:conf/esa/KesselheimRTV13}, and the AdWords problem~\cite{DBLP:journals/jacm/MehtaSVV07} for example. 
The minimization version also has been studied under various settings~\cite{DBLP:conf/icalp/KhullerMV91, DBLP:journals/jal/KalyanasundaramP93, DBLP:conf/soda/MeyersonNP06, DBLP:journals/algorithmica/BansalBGN14, DBLP:conf/approx/Raghvendra16, DBLP:conf/icalp/GuptaGPW19}.
\section{Our Models and Algorithm} \label{sec:modalg}

In this section, we present our models and the algorithm. While the models were already introduced in Section~\ref{sec:intro}, we describe them here again for the sake of completeness. We begin with recalling a few standard definitions: given a graph $G=(V,E)$ and its matching $M\subseteq E$, we say a vertex is \emph{matched} if it is an endpoint of some edge in $M$, and \emph{exposed} otherwise. Given two sets $A$ and $B$, $A\triangle B$ denotes their symmetric difference: $A\triangle B:=(A\setminus B)\cup (B\setminus A)$. For a path $P$ in $G$, we will slightly abuse the notation and treat $P$ as a set of edges when it is clear from the context.

\subparagraph*{Models.}
To define the \emph{maximum-cardinality online bipartite matching problem under vertex arrivals with a hard budget}, consider a bipartite graph $G=(L\cup R, E)$. The algorithm is initially given just $R$ and a hard budget $k$. The rest of the graph is gradually revealed to the algorithm over $|L|$ timesteps: at each timestep, a vertex in $L$ ``arrives'', at which point all its incident edges are revealed to the algorithm. The algorithm is required to output a bipartite matching in the current graph at the end of the timestep. The new matching must be obtained by performing at most $k$ (re)assignments starting from the previous one; if the algorithm matches a vertex at some point, it must remain matched.

A (re)assignment is defined as (re)assigning a vertex to another vertex: e.g., if the matching changes from $\{(x,y)\}$ to $\{(w,x),(y,z)\}$, this counts as four (re)assignments, each on $w$, $x$, $y$, and $z$. The ``previous'' matching of the first timestep is defined as the empty set.

In the \emph{maximum-cardinality online matching problem under edge arrivals with a hard budget}, the graph $G=(V,E)$ is not necessarily  bipartite. The algorithm is initially given $V$ and $k$. The edges of the graph are revealed to the algorithm one by one, over $|E|$ timesteps. The rest of the model is the same as the bipartite vertex-arrival case: the algorithm outputs a matching that is obtained by performing at most $k$ (re)assignments. If the algorithm matches a vertex at some point, it must remain matched.

Finally, in the \emph{maximum-weight online bipartite left-perfect matching problem under vertex arrivals with a hard budget}, we have a weighted complete bipartite graph $G=(L\cup R, E)$. The algorithm is initially given $R$ and $k$, and the vertices in $L$ arrives one by one along with all its incident edges and their weights. The algorithm then outputs a bipartite \emph{left-perfect} matching in the current graph, which must be obtained by performing at most $k$ (re)assignments. Again, if the algorithm matches a vertex at some point, it must remain matched.

In all these problems, we can assume without loss of generality that the hard budget $k$ is even. The following observation justifies why.
\begin{observation}
If an algorithm for the above problems outputs a matching $M_1$ at a timestep and $M_2$ at the next one, the minimum number of (re)assignments required to obtain $M_2$ from $M_1$ is even.
\end{observation}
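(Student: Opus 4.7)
The plan is to identify the minimum number of reassignments with a graph-theoretic quantity and then exploit the alternating structure of symmetric differences. First, I would observe that any (re)assignment operation changes the partner of exactly one vertex, and that transforming $M_1$ into $M_2$ requires touching every vertex whose status differs between the two matchings---that is, every vertex $v$ whose $M_1$-partner (or the fact that $v$ is exposed under $M_1$) differs from its $M_2$-partner. Conversely, these changes can all be realized, so the minimum number of reassignments equals the number of vertices that are incident to at least one edge in $M_1\triangle M_2$.

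Next I would invoke the standard fact that, since $M_1$ and $M_2$ are both matchings, every connected component of the subgraph induced by $M_1\triangle M_2$ is either an even-length alternating cycle or an alternating path whose edges alternate between $M_1\setminus M_2$ and $M_2\setminus M_1$. Even cycles trivially contribute an even number of vertices, so the only thing left to argue is that every such path contains an even number of vertices, i.e., has odd length.

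The crux of the argument is to use the irrevocability constraint built into all three models, namely that any vertex matched in $M_1$ must remain matched in $M_2$. Consider any path component $P$ of $M_1\triangle M_2$, and let $v$ be an endpoint of $P$. Then $v$ is incident to exactly one edge of $M_1\triangle M_2$; if that edge were in $M_1\setminus M_2$, then $v$ would be matched in $M_1$ but (since $M_2$ is a matching and $v$'s only $M_1\triangle M_2$-edge has disappeared) exposed in $M_2$, contradicting irrevocability. Hence the unique $M_1\triangle M_2$-edge at each endpoint of $P$ lies in $M_2\setminus M_1$. Since edges along $P$ strictly alternate, having both extreme edges in $M_2\setminus M_1$ forces $P$ to have odd length, hence an even number of vertices.

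Summing over all components of $M_1\triangle M_2$, the total number of vertices whose partner changes is even, which is precisely the minimum number of (re)assignments by the first step. I do not expect any step to be a serious obstacle; the only subtlety is being careful in the endpoint case analysis to use the specific (one-sided) irrevocability: matched vertices must stay matched, though previously exposed vertices are free to become matched, which is exactly what rules out the bad parity.
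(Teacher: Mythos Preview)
Your proposal is correct and follows essentially the same approach as the paper: identify the minimum number of (re)assignments with the number of vertices of nonzero degree in $M_1\triangle M_2$, decompose into alternating paths and cycles, and use irrevocability to force every path to be an augmenting path (hence of odd length and even vertex count). Your endpoint analysis is simply a more detailed version of the paper's one-line justification that ``every alternating path must be an augmenting path because of the requirement that every matched vertex must remain so.''
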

\begin{proof}
Consider the graph $H=(V,M_1\triangle M_2)$. Let $V'$ be the set of vertices with nonzero degrees in this graph. Note that we can obtain $M_2$ from $M_1$ by performing exactly one (re)assignment on each vertex in $V'$, and this is the minimum number of reassignments.

Graph $H$ is a union of vertex-disjoint alternating paths and alternating cycles. Every alternating cycle has an even number of vertices. Every alternating path must be an augmenting path because of the requirement that every matched vertex must remain so; thus, every path must also contain an even number of vertices.
\end{proof}

\subparagraph*{Algorithm.}
When an object (vertex or edge) arrives, we find a shortest (if the model is unweighted) or most profitable (if weighted) one among all augmenting paths that contain the object.\footnote{It is easy to show that, in vertex-arrival models, the path will not just contain the new vertex but start from it.} If such a path does not exist or is longer than $k-1$, do nothing; otherwise, augment the matching along the path.

\newcommand{\scriptP}{\mathcal{P}}
\newcommand{\scriptQ}{\mathcal{Q}}
\newcommand{\scriptD}{\mathcal{D}}
\newcommand{\scriptX}{\mathcal{X}}
\newcommand{\scriptY}{\mathcal{Y}}

\section{Unweighted Case} \label{sec:analysis-cardinality}

\subsection{Analysis} \label{subsec:analysis-cardinality-vertex}

In this section, we present the analysis of the given algorithm under the unweighted models.

The proof of the following lemma can be found in many places including \cite{DBLP:journals/siamcomp/HopcroftK73}; we include a brief proof sketch for the sake of completeness.

\begin{lemma} [\cite{DBLP:journals/siamcomp/HopcroftK73}]
\label{prelim:lem-long-aug}
Let $G$ be a graph, $M$ be a matching in $G$, and $k$ be a positive even number.
If the length of every augmenting path of $G$ with respect to $M$ is at least $k + 1$,
$$
|M| \geq \left(1 -  \frac{2}{k + 2} \right) \mathsf{OPT},
$$
where $\mathsf{OPT}$ is the size of a maximum matching in $G$.
\end{lemma}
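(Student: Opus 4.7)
The plan is to prove this by a standard symmetric-difference argument, analyzing the components of $M \triangle M^*$ where $M^*$ is a maximum matching in $G$.

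First, I would recall that the subgraph induced by $M \triangle M^*$ has maximum degree at most $2$ (since each of $M$ and $M^*$ is a matching), so it decomposes into vertex-disjoint paths and cycles. In each such component, the edges alternate between $M$ and $M^*$. Every alternating cycle contains the same number of edges from $M$ as from $M^*$, and every alternating path that is not augmenting for $M$ contributes at most one more edge from $M^*$ than from $M$ overall, but in fact I can more cleanly argue as follows: only components that are augmenting paths \emph{with respect to $M$} contribute a net surplus of $M^*$-edges over $M$-edges (a surplus of exactly one), since every other component has equal numbers of $M$- and $M^*$-edges or a surplus in favor of $M$. Hence if $p$ denotes the number of augmenting paths for $M$ appearing in $M \triangle M^*$, we obtain $\mathsf{OPT} - |M| = |M^*| - |M| \leq p$.

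Next I would lower-bound the number of $M$-edges contributed by each such augmenting path. Since $k$ is a positive even number, $k+1$ is odd, and an augmenting path of length $\ell \geq k+1$ has $\ell$ edges alternating so that both endpoints are exposed; thus it contains $(\ell-1)/2$ edges from $M$, which is at least $k/2$. Summing over the $p$ augmenting paths, which are vertex-disjoint and hence edge-disjoint, I get
\[
|M| \;\geq\; \tfrac{k}{2}\cdot p.
\]
Combining this with $p \geq \mathsf{OPT} - |M|$ yields
\[
|M| \;\geq\; \tfrac{k}{2}\bigl(\mathsf{OPT} - |M|\bigr),
\]
and rearranging gives $|M| \geq \frac{k}{k+2}\,\mathsf{OPT} = \bigl(1 - \frac{2}{k+2}\bigr)\mathsf{OPT}$, as required.

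There is no real obstacle here; the only point that needs care is the bookkeeping of $M$- versus $M^*$-edges across the different types of components of $M \triangle M^*$, and the observation that the parity hypothesis ($k$ even, so $k+1$ odd) is consistent with augmenting paths necessarily having odd length. Everything else is routine, and the bound is tight for disjoint unions of augmenting paths of length exactly $k+1$.
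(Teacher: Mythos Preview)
Your proof is correct and follows essentially the same symmetric-difference argument as the paper. The only cosmetic difference is that the paper bounds the ratio $|M|/|M^\star|$ component-by-component via $\min_{S}\frac{|M\cap S|}{|M^\star\cap S|}$, whereas you count globally using the number $p$ of augmenting components and the inequality $|M|\geq \tfrac{k}{2}\,p$; both are standard repackagings of the Hopcroft--Karp bound.
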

\begin{proof} [Proof sketch]
Let $M^\star$ be a maximum matching in $G$. Suppose $|M|\neq|M^\star|$.
Observe that $M \triangle M^\star$ is the union of vertex-disjoint alternating cycles and alternating paths; let $\mathcal{S}$ be the set of these cycles and paths.
We then have $\frac{|M|}{|M^\star|}\geq\min_{S\in\mathcal{S}}\frac{|M\cap S|}{|M^\star\cap S|}$. Every alternating cycle has the same number of edges from $M$ and $M^\star$; every augmenting path $S\in\mathcal{S}$ has at least $k+1$ edges and hence $\frac{|M\cap S|}{|M^\star\cap S|}\geq\frac{\lfloor \frac{k+1}{2}\rfloor}{\lceil \frac{k+1}{2}\rceil}=1-\frac{2}{k+2}$.
\end{proof}

The following lemma describes an operation we can perform on two augmenting paths, one of which is with respect to the matching that results from the augmentation using the other one. While it is tempting to imagine a simpler surgical operation satisfying the desired conditions, the correct version can be rather complicated when the graph is not bipartite. In favor of simplicity, we opted for the current version of the proof.

\newpage

 \begin{lemma} \label{keylemma:main}
Let $G = (V, E)$ be a graph (which is not necessarily bipartite), and let $M \subseteq E$ be a matching in $G$.
Let $P$ be an augmenting path in $G$ with respect to $M$, and let $Q$ be an augmenting path in $G$ with respect to $M \triangle P$.
Then, there exist two augmenting paths $X$ and $Y$ in $G$ with respect to $M$ such that
\begin{enumerate}
\item \label{keylemma:cond-disjoint} $X$ and $Y$ are edge-disjoint;
\item \label{keylemma:cond-contained} $X\cup Y \subseteq P\cup Q$; and
\item \label{keylemma:cond-feasible} the set of the four endpoints of $X$ and $Y$ equals the set of the four endpoints of $P$ and $Q$.
\end{enumerate}
\end{lemma}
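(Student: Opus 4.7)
The plan is to consolidate both augmentations and study one symmetric difference. Let $M' := M \triangle P$ and $M'' := M' \triangle Q = M \triangle P \triangle Q$; these are both matchings in $G$ by the hypotheses on $P$ and $Q$. As edge sets, $M \triangle M'' = P \triangle Q \subseteq P \cup Q$, and because $M$ and $M''$ are both matchings this graph has maximum degree at most $2$. It therefore decomposes into vertex-disjoint simple paths and simple cycles, each alternating between $M$- and $M''$-edges. I would take $X$ and $Y$ to be the path components of this decomposition.

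The substantive step is to identify the endpoints of these path components, i.e., the degree-$1$ vertices of $M \triangle M''$, which are exactly the vertices whose matching status differs between $M$ and $M''$. Let $p_1, p_2$ denote the endpoints of $P$ and $q_1, q_2$ those of $Q$. I would first argue that $q_1, q_2 \notin P$: otherwise $q_j$ would be matched in $M'$, either as a newly matched endpoint of $P$ or as an internal vertex of $P$ (whose match merely switches), contradicting the fact that $q_j$ is exposed in $M'$. Consequently $\{p_1, p_2\}\cap\{q_1, q_2\} = \emptyset$ and, moreover, $q_1$ and $q_2$ are exposed in $M$. Each of $p_1, p_2, q_1, q_2$ is then exposed in $M$ and matched in $M''$ (the $p_i$ because augmenting along $P$ matches it and the subsequent augmentation along $Q$ cannot unmatch a non-endpoint of $Q$; the $q_j$ by definition of augmenting $M'$ along $Q$), so each has degree exactly $1$ in $M \triangle M''$.

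It remains to verify that no other vertex has degree $1$. Vertices outside $P \cup Q$ have the same status in $M$, $M'$, $M''$. For a vertex $v \in (P \cup Q)\setminus\{p_1, p_2, q_1, q_2\}$, a short case split — $v$ internal to $P$ only, internal to $Q$ only, or internal to both — shows that $v$ is matched in both $M$ and $M''$, hence has degree $0$ or $2$ in $M \triangle M''$. This is the step where a bit of care is needed, since in the nonbipartite case $P$ and $Q$ may overlap intricately; this case analysis is the main (though routine) obstacle, relying on the fact that internal vertices of an augmenting path remain matched after augmentation.

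With exactly four degree-$1$ vertices, $M \triangle M''$ contains exactly two path components, which I designate $X$ and $Y$. Conditions (1)--(3) then follow immediately: distinct components of $M \triangle M''$ are edge-disjoint; $X \cup Y \subseteq M \triangle M'' = P \triangle Q \subseteq P \cup Q$; and the four endpoints of $X$ and $Y$ form the set $\{p_1, p_2, q_1, q_2\}$. Finally, each of $X$ and $Y$ alternates between $M''$- and $M$-edges with $M''$-edges at both ends; since its two endpoints are exposed in $M$ and its end edges lie outside $M$, each is an $M$-augmenting path in $G$, as required.
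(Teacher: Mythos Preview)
Your proof is correct and follows the same high-level idea as the paper: look at $P\triangle Q$, show it has exactly four odd-degree vertices (the endpoints of $P$ and $Q$), and take $X,Y$ to be its two path components. The paper argues directly that no vertex of $P\triangle Q$ has degree~$\geq 3$ by a small case analysis involving the edge of $P\setminus M$ at a hypothetical high-degree vertex, and then separately verifies that $X$ and $Y$ alternate with respect to $M$ by another case analysis on whether consecutive edges come from $P$ or from $Q$. Your framing $P\triangle Q = M\triangle M''$ with $M'' = M\triangle P\triangle Q$ a matching is a genuine streamlining: the degree bound $\leq 2$ and the $M$-alternation of every component both come for free from the fact that $M$ and $M''$ are matchings, so the only substantive work left is pinning down the four degree-$1$ vertices, which you handle the same way the paper does (first showing $q_1,q_2\notin P$). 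The case split you flag as ``routine'' is indeed routine once one notes that internal vertices of an augmenting path are matched both before and after augmentation.
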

\begin{proof}
Every vertex on $P$ is matched in $M\triangle P$; hence, the two endpoints of $Q$ are not on $P$.

Consider the graph $H:=(V,P\triangle Q)$. In this graph, exactly four vertices have odd degree: the endpoints of $P$ and of $Q$. All other vertices have even degree.
We claim that no vertex in $H$ has degree 3 or higher: suppose towards contradiction that a vertex $v$ has degree $3$ or higher. Since the endpoints of $Q$ are not on $P$, $v$ must be an internal vertex of $Q$. Since $v$ is on $P$, there exists some edge $e\in P\setminus M$ incident with $v$. Since $e$ is in the matching $M\triangle P$, $e$ must be in $Q$ as well and therefore $e\notin P\triangle Q$.

This shows that $H$ is a union of exactly two paths and some cycles, which are vertex-disjoint. Let $X$ and $Y$ be these two paths, and we can see that the three conditions of this lemma are satisfied. Since the endpoints of $Q$ are not on $P$, they must be exposed in $M$; it only remains to show that $X$ and $Y$ are alternating paths.

Note that, for every edge in $Q\setminus P$, the edge is in $M$ if and only if it is in $M\triangle P$. Let $v$ be a vertex of degree $2$ in $H$. If both of its incident edges are from $P$, exactly one of them is in $M$ and there is nothing to prove. This is the case when both are from $Q$ as well.
Otherwise, $v$ must be on both $P$ and $Q$. Since the endpoints of $Q$ are not on $P$, $v$ must be an internal vertex of $Q$; a similar argument to the above shows that there must exist some edge $e\in E$ incident with $v$ that belongs to both $P\setminus M$ and $Q\cap(M\triangle P)$. Thus, one of the two edges incident with $v$ in $P\triangle Q$ is in $P\cap M$, and the other is in $Q\setminus (M\triangle P)=Q\setminus M$.
\end{proof}

Now we present the competitive analysis for the bipartite vertex-arrival case.

\begingroup
\def\thetheorem{\ref{analysis-cardinality:vertex-arrival}}
\begin{theorem} [restated]
The given algorithm is a $(1 - \frac{2}{k + 2})$-competitive algorithm for the maximum-cardinality online bipartite matching problem under vertex arrivals with a hard budget of $k$ on the number of reassignments.
\end{theorem}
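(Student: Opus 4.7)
The plan is to establish, by induction on timesteps, the invariant that at the end of every timestep the current matching admits no augmenting path of length at most $k-1$; once this is in hand, Lemma~\ref{prelim:lem-long-aug} applied to the final matching immediately yields the claimed $(1-\frac{2}{k+2})$ competitive ratio.

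For the inductive step, let $M$ be the matching carried into timestep $t$, let $v$ be the vertex arriving at timestep $t$, and let $G$ be the graph after $v$'s arrival. Since $v$ is exposed in $M$ and every new edge of $G$ is incident to $v$, any augmenting path of $M$ in $G$ that is not already an augmenting path in the previous graph must contain $v$; and, because internal vertices of augmenting paths are matched, such a path must have $v$ as an endpoint. If the algorithm performs no augmentation, then every augmenting path starting from $v$ has length at least $k$; by parity (augmenting paths have odd length and $k$ is even) such a path has length at least $k+1$, and, combined with the inductive hypothesis, the invariant holds for $M$ unchanged.

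Otherwise, the algorithm augments along some shortest augmenting path $P$ starting from $v$ with $|P|\le k-1$; write $M':=M\triangle P$. Suppose for contradiction there is an augmenting path $Q$ of $M'$ with $|Q|\le k-1$. Invoking Lemma~\ref{keylemma:main} yields edge-disjoint augmenting paths $X$ and $Y$ of $M$ contained in $P\cup Q$, whose four endpoints coincide with the four endpoints of $P$ and $Q$; these four endpoints are pairwise distinct, because the endpoints of $P$ become matched in $M'$ while the endpoints of $Q$ are exposed in $M'$. Since $v$ is exposed in $M$, it cannot be internal to any augmenting path of $M$; hence $v$ is an endpoint of exactly one of $X,Y$, say $X$, and does not appear on $Y$ at all. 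Then $Y$ is an augmenting path of $M$ in the previous graph, so by the inductive hypothesis $|Y|\ge k+1$, and $|X|\ge |P|$ because the algorithm chose $P$ as a shortest augmenting path starting at $v$. Summing the lengths,
\[
|P|+|Q|\;\ge\;|X|+|Y|\;\ge\;|P|+(k+1),
\]
so $|Q|\ge k+1$, contradicting $|Q|\le k-1$ and completing the induction.

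The main obstacle is this last case: the contradiction requires the surgical operation of Lemma~\ref{keylemma:main} to be combined with the algorithm's shortest-path rule, and the essential leverage comes from the observation that, because $v$ is exposed in $M$, exactly one of $X,Y$ can contain $v$ at all, so the other one lies entirely in the previous graph and is subject to the inductive hypothesis.
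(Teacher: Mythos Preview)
Your proof is correct and follows essentially the same route as the paper's: you establish the invariant by induction, handle the no-augmentation case via the observation that any new augmenting path must start at the arriving vertex, and in the augmentation case apply Lemma~\ref{keylemma:main} to get $X,Y$, argue that exactly one of them contains the new vertex so the other lies in the previous graph, and conclude via $|P|+|Q|\ge|X|+|Y|\ge|P|+(k+1)$. Your write-up is in fact slightly more explicit than the paper's in justifying the parity step, the distinctness of the four endpoints, and why the new vertex does not appear on $Y$.
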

\addtocounter{theorem}{-1}
\endgroup

\begin{proof}
We will show that the length of every augmenting path is at least $k+1$ at the end of each timestep. The desired conclusion will follow from Lemma~\ref{prelim:lem-long-aug}.

Use induction on time $t$. We define the matching at the end of ``time 0'' as the empty set, and then the base case is trivial.
Now consider a timestep $t\geq 1$.
Let $G_{t-1}$ and $G_{t}$, respectively, be the graph at time $t-1$ and $t$. Let $M_{t-1}$ be the matching that is output by the algorithm at the end of time $t-1$, and let $u$ be the vertex that newly arrives at time $t$. From the induction hypothesis, every augmenting path in $G_{t-1}$ with respect to $M_{t-1}$ has length at least $k + 1$.

If the algorithm does nothing at time $t$, this is because the length of every augmenting path in $G_{t}$ starting from $u$ with respect to $M_{t-1}$ is at least $k + 1$. Therefore, if there exists some other augmenting path in $G_{t}$ of length no more than $k - 1$, this path does not contain $u$ and is an augmenting path also in $G_{t-1}$, contradicting the induction hypothesis.

Suppose that, at time $t$, the algorithm augments $M_{t - 1}$ along a shortest augmenting path $P$ in $G_{t}$.
Let $Q$ be an arbitrary augmenting path in $G_{t}$ with respect to $M_{t - 1} \triangle P$.
Then, by Lemma~\ref{keylemma:main}, there exist two augmenting paths $X$ and $Y$ in $G_{t}$ with respect to $M_{t - 1}$ satisfying the conditions of the lemma.
Without loss of generality, suppose that $u$ is an endpoint of $X$.
We then have $|X| \geq |P|$ since $X$ is an augmenting path in $G_{t}$ starting from $u$ with respect to $M_{t - 1}$.
Since $u$ is an endpoint of $X$, $Y$ does not contain $u$. Observe that $Y$ is an augmenting path in $G_{t - 1}$ with respect to $M_{t - 1}$ from Conditions~\ref{keylemma:cond-disjoint}~and~\ref{keylemma:cond-feasible} of Lemma~\ref{keylemma:main}; from the induction hypothesis, we have $|Y| \geq k + 1$.
By Condition~\ref{keylemma:cond-contained} of the lemma, we have $ |P| + |Q| \geq |X| + |Y| \geq |P| + k + 1$, implying  $|Q| \geq k + 1$.
\end{proof}

\begin{remark}
In the nonbipartite case, where the arrival of vertex $v$ reveals only the edges between $v$ and the other vertices that have already arrived, the given algorithm achieves the same competitive ratio, since neither Lemma~\ref{keylemma:main} nor the proof of Theorem~\ref{analysis-cardinality:vertex-arrival} assumes bipartiteness.
\end{remark}

A similar argument gives a competitive analysis under edge arrivals. We defer its proof to Appendix~\ref{sec:app-defer}.

\begingroup
\def\thetheorem{\ref{analysis-cardinality:edge-arrival}}
\begin{theorem} [restated]
The given algorithm is a $(1 - \frac{2}{k + 2})$-competitive algorithm for the maximum-cardinality online matching problem under edge arrivals with a hard budget of $k$ on the number of reassignments.
\end{theorem}
\addtocounter{theorem}{-1}
\endgroup

\subsection{Lower Bounds} \label{subsec:analysis-cardinality-hard}

In this section, we present the lower bounds under the unweighted models, starting with vertex-arrival models.

\begingroup
\def\thetheorem{\ref{analysis-cardinality-hard:lem-hard}}
\begin{theorem} [restated]
For all constant $\epsilon > 0$, no deterministic algorithm is $(1 - \frac{2}{k + 2} + \epsilon)$-competitive for the maximum-cardinality online bipartite matching problem under vertex arrivals with a hard budget of $k$ on the number of reassignments.
\end{theorem}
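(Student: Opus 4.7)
The plan is to prove this lower bound by exhibiting, for each deterministic online algorithm $A$ and each constant $\epsilon > 0$, an adaptively-revealed bipartite instance on which $A$'s output matching $M$ satisfies $|M| \le \frac{k}{k+2} |M^\star|$, which is strictly less than $(1 - \frac{2}{k+2} + \epsilon) |M^\star|$. Let $\ell = k/2$. The instance will use $\ell + 1$ right vertices $v_0, v_1, \ldots, v_\ell$, given at the start, and $\ell + 1$ left vertices $u_0, u_1, \ldots, u_\ell$ revealed one per timestep, such that the underlying edge set forms a single path $u_0 - v_0 - u_1 - v_1 - \cdots - u_\ell - v_\ell$ of length $k + 1$. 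The maximum matching of this path is $M^\star = \{u_i v_i : 0 \le i \le \ell\}$, of size $\ell + 1$.

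The goal is to force $A$'s final matching $M$ to have size only $\ell$ and satisfy that $M \triangle M^\star$ is exactly the entire length-$(k+1)$ path, which is then an augmenting path that $A$ cannot augment (augmentation would cost $k+2$ reassignments, exceeding the budget). My plan for the reveal order is to first reveal $u_1, u_2, \ldots, u_\ell$ (each with its two path-edges $\{v_{i-1}, v_i\}$), observing $A$'s responses, and then reveal $u_0$ last with a single edge chosen adaptively: $u_0 v_0$ if $A$ has produced the ``shifted'' matching $\{u_i v_{i-1}: i \ge 1\}$, or $u_0 v_\ell$ if $A$ has produced the ``diagonal'' matching $\{u_i v_i : i \ge 1\}$. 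In both cases, the only augmenting path from $u_0$ traverses the whole gadget and has length exactly $k+1$, so $A$'s final matching has size $\ell$ while $|M^\star| = \ell + 1$, giving ratio $\frac{\ell}{\ell+1} = 1 - \frac{2}{k+2}$.

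The main obstacle is handling intermediate \emph{mixed} matchings where $A$'s responses do not conform to either canonical form (for instance, $\{u_1 v_0, u_2 v_2\}$ when $k = 4$): in such a mixed state, revealing $u_0$ with either candidate edge leaves a length-$3$ augmenting path that $A$ can then augment, saving itself. To resolve this, I would have the adversary refine the intermediate reveals adaptively: upon observing $A$'s partial matching after step $i$, the adversary reveals $u_{i+1}$ with a neighborhood restricted enough to force $A$'s response to maintain a uniformly shifted or uniformly diagonal matching, while auxiliary decoy left vertices may be introduced in parallel to ensure the final graph still attains $|M^\star| = \ell + 1$. Verifying that this refined strategy defeats every deterministic algorithm constitutes the technical heart of the proof.
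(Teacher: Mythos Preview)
Your high-level picture—the path gadget with $\ell+1=\tfrac{k}{2}+1$ right vertices, aiming to leave a single augmenting path of length $k+1$—is exactly the paper's. But the proposal has a genuine gap at precisely the point you flag as ``the technical heart'': you never specify an adversary that rules out mixed matchings, and the one you sketch (restricting neighborhoods, adding decoys) is not actually defined. With your reveal order $u_1,\ldots,u_\ell$ followed by $u_0$, the second arriving vertex already sees two exposed neighbors, so the algorithm can freely pick either and produce a mixed state; from then on nothing in your description forces it back. Simply saying that a refined strategy would work is the missing step, not a plan for it.

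The paper avoids this difficulty by a different, and simpler, reveal order together with a renaming trick. It reveals the path vertices in order $u_1,u_2,\ldots$ and maintains the invariant that the current matching is exactly the ``shifted'' one $\{(u_i,v_{i+1}):i<t\}$. At step $t$ the new vertex $u_t$ is given neighbors $v_t$ (currently matched to $u_{t-1}$) and $v_{t+1}$ (fresh and exposed). Because one neighbor is matched and the other fresh, the algorithm has only three options: do nothing (adversary stops immediately, ratio already $\le 1-\tfrac{2}{k+2}$), add $(u_t,v_{t+1})$ (invariant preserved), or take the full augmenting path down to $v_1$ (the adversary then \emph{relabels} the path end-for-end so the invariant holds again). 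At the last step the single edge $(u_n,v_n)$ leaves an augmenting path of length $k+1$. The renaming is what replaces your undefined ``refined strategy''; without it the argument does not close.

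One further omission: the competitive ratio in this paper allows an additive constant, so a single gadget with $|M^\star|=\tfrac{k}{2}+1$ is not enough. The paper runs $N$ disjoint copies of the adversary in parallel and lets $N\to\infty$; your write-up should include this amplification step as well.
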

\addtocounter{theorem}{-1}
\endgroup

\begin{proof}

We will first present an adversary against which any deterministic algorithm outputs a matching of size strictly smaller than the optimum. We will then describe how we can modify this adversary to show the desired lower bound. Recall that oblivious adversaries are as powerful as adaptive offline adversaries against deterministic algorithms (see, e.g.,~\cite{DBLP:journals/algorithmica/Ben-DavidBKTW94}). As such, we assume an adaptive offline adversary in this proof.

Let $n := \frac{k}{2} + 1$. Let $G=(L\cup R,E)$ be a bipartite graph with $|L|=|R|=n$, whose vertices are named as follows: $L:=\{u_1,\ldots, u_n\}$ and $R:=\{v_1,\ldots, v_n\}$. The adversary gradually reveals this $G$ while maintaining the following invariant, possibly by renaming some vertices, right before the arrival of the $t$-th vertex:\begin{quote}
For all $1\leq i<t$, $u_i$ has already arrived and is adjacent to exactly two vertices, $v_i$ and $v_{i+1}$, and $(u_i,v_{i+1})$ is in the current matching.
\end{quote}
Note that the graph is therefore just a long path. Due to the renaming, the indices of $u_1,\ldots,u_{t-1}$ may not reflect the order of arrivals.

At time $t$ ($1\leq t<n$), the adversary reveals two edges $(u_i,v_i)$ and $(u_i,v_{i+1})$. The algorithm then has only three options: \textsf{(a)} do nothing; \textsf{(b)} add $(u_i,v_{i+1})$ to the matching; \textsf{(c)} add $(u_i,v_{i})$ to the matching. For the last option though, in order to honor the commitment we made to vertices, we need to take the augmenting path $\langle u_t,v_t,u_{t-1},\ldots,v_1\rangle$.

If the algorithm chooses Option~\textsf{(a)}, the adversary can immediately stop: observe that the graph has a matching of cardinality $t$, and the algorithm has already failed to output a maximum matching. If the algorithm chooses Option~\textsf{(b)}, the invariant remains satisfied, and we can proceed to the next timestep. If the algorithm chooses Option~\textsf{(c)}, we can maintain the invariant by renaming the vertices and proceed.

At time $t=n$, the adversary reveals only $(u_t,v_t)$. Option~\textsf{(b)} therefore does not exist, and Option~\textsf{(c)} cannot be taken since the augmenting path is too long. The algorithm is now forced to choose Option~\textsf{(a)} and output a suboptimal matching.

Now we prove the theorem. Suppose towards contradiction that there is an algorithm that always outputs a matching of size at least $(1 - \frac{2}{k + 2} + \epsilon) \cdot \mathsf{OPT} - c$ for some constant $\epsilon>0$ and $c\geq 0$, where $\mathsf{OPT}$ is the offline optimum of the graph the adversary produces.
Consider an adversary that ``simulates'' $N$ copies of the above adversary, where $N$ is a sufficiently large integer to be chosen later.
We will keep the graphs constructed by these adversaries mutually disjoint.
At each time, we choose an arbitrary copy of adversary whose current matching is left-perfect. The ``global'' adversary reveals the vertex that this copy would reveal.

Since each copy of adversary forces the matching to be suboptimal within $n$ timesteps, within $n\cdot N$ timesteps, every copy of adversary forces a suboptimal matching. By choosing $N > \frac{2c}{\epsilon (k + 2)}$, one can verify that the algorithm finds a matching of cardinality no greater than $\mathsf{OPT}-N< (1 - \frac{2}{k + 2} + \epsilon) \cdot \mathsf{OPT} - c$ since $\mathsf{OPT} \leq n\cdot N$.
\end{proof}
Now we consider the edge-arrival model against adaptive online adversaries. Note that this immediately implies the same lower bound against adaptive offline adversaries.

\newcommand{\E}{\mathbb{E}}

\begingroup
\def\thetheorem{\ref{analysis-cardinality-hard:edge-adapt-online}}
\begin{theorem} [restated]
For all constant $\epsilon > 0$, no randomized algorithm is $(1 - \frac{2}{k + 2} + \epsilon)$-competitive against an adaptive online adversary for the maximum-cardinality online matching problem under edge arrivals with a hard budget of $k$ on the number of reassignments.
\end{theorem}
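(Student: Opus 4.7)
The plan is to port the adversary of Theorem~\ref{analysis-cardinality-hard:lem-hard} to the edge-arrival model by simulating each vertex arrival with two consecutive edge reveals, and then exploit the adaptive online adversary's ability to observe the algorithm's outputs to lift the deterministic argument realization by realization. Fix $n := k/2 + 1$.

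The core single-copy adversary constructs the same path on $2n$ vertices, $v_1 - u_1 - v_2 - u_2 - \cdots - u_{n-1} - v_n - u_n$, treating the $v_i, u_j$ as logical labels that may be internally relabeled. Before step $t$, the adversary maintains the invariant that the revealed graph is the prefix $v_1 - u_1 - \cdots - u_{t-1} - v_t$ with current matching $\{(u_i, v_{i+1}) : 1 \leq i \leq t - 1\}$. Each step $t < n$ consists of two phases: Phase~1 reveals $(u_t, v_t)$ and Phase~2 reveals $(u_t, v_{t+1})$. The vertex-commitment constraint forces the symmetric difference between consecutive matchings to be a union of augmenting paths in the current graph, which is always a simple path; enumerating these augmenting paths shows the algorithm's combined outcome across the two phases must coincide with one of the three options~(a), (b), (c) of Theorem~\ref{analysis-cardinality-hard:lem-hard}: (a) leave both new edges unmatched, (b) match only the length-$1$ augmenting path $u_t - v_{t+1}$ and preserve the invariant, or (c) augment along the long path $v_1 - u_1 - \cdots - v_t - u_t$ (which consumes $2t$ reassignments and is thus feasible only for $t \leq k/2$). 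The adversary reacts exactly as in Theorem~\ref{analysis-cardinality-hard:lem-hard}: it stops upon~(a) with a one-edge suboptimality, continues on~(b), and performs an internal path-reversal relabeling before continuing on~(c). At step $n$, only $(u_n, v_n)$ is revealed; since option~(c) would require $2n = k + 2 > k$ reassignments, only option~(a) is available, leaving a final matching of size $n - 1$ against $\mathsf{OPT} = n$.

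Since every adversary decision depends solely on the algorithm's past outputs, it is a valid adaptive online adversary, and the deterministic analysis above applies to every realization of the algorithm's random bits---each producing a final matching of size $t^\star - 1$ with $\mathsf{OPT} = t^\star$ for the realization-dependent termination step $t^\star \in \{1, \ldots, n\}$. For amplification, I would run $N$ independent copies on disjoint vertex sets and interleave their reveals; by linearity of expectation, $\E[\text{ALG}] = \E[\sum_i t_i^\star] - N$ while $\E[\mathsf{OPT}] = \E[\sum_i t_i^\star]$, so $\E[\mathsf{OPT}] - \E[\text{ALG}] = N$ deterministically. Using $\sum_i t_i^\star \leq Nn$ with $n = (k+2)/2$, the standard algebraic rearrangement (as in Theorem~\ref{analysis-cardinality-hard:lem-hard}) contradicts any inequality of the form $\E[\text{ALG}] \geq (1 - \frac{2}{k+2} + \epsilon) \E[\mathsf{OPT}] - c$ once $N > \frac{2c}{\epsilon(k+2)}$, in the nontrivial regime $\epsilon < \frac{2}{k+2}$. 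The main obstacle is the ``collapse to three options'' case analysis underlying the adversary: because edge arrivals give the algorithm finer decision points than vertex arrivals, one must carefully enumerate all feasible symmetric-difference subgraphs of $M_{t-1} \triangle M_t$ in a simple path and verify that the commitment and budget constraints always force the outcome into one of the three vertex-arrival options, even when the intermediate Phase~1 matching differs from $M_{t-1}$.
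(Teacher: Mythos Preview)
Your approach differs substantially from the paper's, and it has a genuine gap concerning what ``competitive against an adaptive online adversary'' means. In that model the benchmark is the adversary's \emph{own online solution} $\mathsf{ADV}$, not the offline optimum; yet your analysis sets ``$\mathsf{OPT}=t^\star$'' per copy and never specifies an online matching for the adversary. This matters: consider an algorithm that never selects option~(a) until forced at step $n$. Then every copy has $t^\star=n$, and at step~$n$ only the single edge $(u_n,v_n)$ is revealed. Any natural adversary strategy---``add all Phase~1 edges'' or ``add all Phase~2 edges''---yields only $n-1$ edges at that point, exactly tying the algorithm and giving ratio~$1$. To reach $n$, the adversary would have to augment through $(u_n,v_n)$, but after a sequence of relabelings (option~(c) steps) the unique augmenting path in the adversary's matching can be long; you have not shown it always fits within the budget~$k$. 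The ``collapse to three options'' case analysis you flag as the main obstacle is in fact the easy part; the hard part you have not addressed is producing a certified online $\mathsf{ADV}$.

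The paper avoids this difficulty with a different construction. It first reveals $\tfrac{kN}{2}$ pairwise-disjoint edges $F$ (which the adversary does \emph{not} put in its solution), then repeatedly selects $\tfrac{k}{2}$ edges from the algorithm's current matching and reveals $\tfrac{k}{2}+1$ fresh edges that, together with the selected ones, form an augmenting path of length $k+1$. The adversary adds each of these fresh edges to its solution the moment it is revealed---valid with just $2$ assignments since both endpoints are always exposed in the adversary's matching. Meanwhile, the algorithm is locked into its $F$-edges (any change would require the full length-$(k+1)$ path), so each path contributes $\tfrac{k}{2}$ algorithm edges versus $\tfrac{k}{2}+1$ adversary edges, giving the ratio $\tfrac{k}{k+2}=1-\tfrac{2}{k+2}$ directly. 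The crucial design choice is that the adversary's intended matching consists of edges whose endpoints are fresh at reveal time, making its online strategy trivial---precisely the property your ported path adversary lacks.
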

\addtocounter{theorem}{-1}
\endgroup

\begin{proof}
Let $\mathsf{ADV}$ denote the cardinality of the online solution produced by the adversary. Suppose towards contradiction that there is a randomized online algorithm that outputs a matching whose expected cardinality is at least $(1 - \frac{2}{k + 2} + \epsilon) \cdot \E[\mathsf{ADV}] -c$ for some constant $c\geq 0$.

Now consider the following adversary. The adversary chooses the number of vertices as $(k+2)N$ for some sufficiently large integer $N$. In the first $\frac{kN}{2}$ timesteps, it reveals an edge that is disjoint from all of the previously revealed edges; the adversary does not include these edges in its solution. At the end of time~$\frac{kN}{2}$, the graph will contain $kN$ vertices of degree one and $2N$ of degree zero. Let $F$ be the set of edges revealed so far and $U$ be the set of the $2N$ vertices of zero degree. We will show later that the edges in $F$ are the only edges the algorithm can ever add to the matching, and once the algorithm adds an edge in $F$, the edge will never leave the matching.

The adversary then selects $\frac{k}{2}$ edges in the current matching output by the algorithm. (If the algorithm chose fewer than $\frac{k}{2}$ edges, the adversary immediately stops at this point.) Let us call the selected edges $(v_1,v_2), (v_3,v_4), \ldots, (v_{k-1},v_k)$. The adversary now reveals $\frac{k}{2}+1$ edges (over $\frac{k}{2}+1$ timesteps, of course) so that they together with the selected $\frac{k}{2}$ edges form an augmenting path of length $k+1$ between two vertices from $U$. To be precise, the adversary reveals the edges $(u_1,v_1),(v_2,v_3),\ldots,(v_{k-2},v_{k-1}),(v_k,u_2)$, one by one, for some $u_1,u_2\in U$. These new edges are all included in the adversary's solution. Once this is done, the adversary again selects new $\frac{k}{2}$ edges in the algorithm's current matching and repeat this process. The adversary does not select an edge that was selected before (and a path containing it has been created) again; likewise, the two endpoints $u_1$ and $u_2$ will always be selected ``fresh'' from $U$. The adversary stops when there remain fewer than $\frac{k}{2}$ edges in the matching that were not previously selected. We remark that the fresh choice of $u_1$ and $u_2$ is always possible because $|U|=2\cdot \frac{|F|}{k/2}$. We also note that, during this process, some edges in $F$ that was not added to the matching during the first $\frac{kN}{2}$ timesteps may newly enter the matching, but in any case, there will remain fewer than $\frac{k}{2}$ unselected matching edges when the adversary stops.

At any point of time, if the algorithm wants to remove some edge from the matching, then it needs to honor the commitment that its endpoints will remain matched. Due to the adversary's construction, this enforces the algorithm to choose the entire augmenting path of length $k+1$. Similarly, if the algorithm wants to add to the matching any edge outside $F$, it is again required to choose the entire augmenting path of length $k+1$. Neither is possible because they involve $k+2$ (re)assignments. This proves our earlier claim that the algorithm can only add the edges in $F$ to the matching, and that once an edge enters the matching it will never leave.

Let us now bound $\mathsf{ADV}$. Since we can consider an imaginary adversary that adds the first $\frac{kN}{2}$ edges to its solution and stops immediately after time~$\frac{kN}{2}$, the algorithm must choose at least $(1 - \frac{2}{k + 2} + \epsilon)\cdot\frac{kN}{2}-c$ edges from $F$ in expectation by the end of time $\frac{kN}{2}$, even against the given adversary. Let $\mathsf{ALG}$ be the number of edges that are output by the algorithm at the end.
The graph contains at least $\lfloor \frac{\mathsf{ALG}}{\frac{k}{2}} \rfloor$ paths and hence we have $\mathsf{ADV}= \lfloor \frac{\mathsf{ALG}}{\frac{k}{2}} \rfloor \cdot (\frac{k}{2}+1)$.
For $N > \frac{1}{\epsilon} ( 1 + \frac{2c}{k}) $, we have $\E [\mathsf{ALG}] <(1 - \frac{2}{k + 2} + \epsilon)\cdot \E [\mathsf{ADV}]-c$ from $\E [\mathsf{ALG}] \geq (1 - \frac{2}{k + 2} + \epsilon)\cdot\frac{kN}{2}-c$.
\end{proof}

\newcommand{\exposed}{\mathsf{exp}}
\newcommand{\nil}{\mathsf{nil}}

\section{Weighted Case} \label{sec:analysis-weight}
\subsection{Analysis} \label{subsec:analysis-weight-comp}

In this section, we present the analysis of the given algorithm under the weighted model. Note that we do not assume triangle inequality; only the nonnegativity is required.

\begingroup
\def\thetheorem{\ref{analysis-weight-comp:main}}
\begin{theorem} [restated]
The given algorithm is a $\frac{1}{2}$-competitive algorithm for the maximum-weight online bipartite left-perfect matching problem with general weights under vertex arrivals with a hard budget $k = 4$ on the number of reassignments.
\end{theorem}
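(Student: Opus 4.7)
The plan is to prove $w(M_{|L|}) \geq \tfrac{1}{2} w(\mathsf{OPT})$, where $\mathsf{OPT}$ is any fixed maximum-weight left-perfect matching on the final bipartite graph, by an amortized potential-function argument. Write $v^*_i := \mathsf{OPT}(u_i)$ and $\mathsf{OPT}^{(t)} := \{(u_i, v^*_i) : i \leq t\}$, and define
\[ \Phi_t \;:=\; w(M_t) \;-\; \tfrac{1}{2}\, w(\mathsf{OPT}^{(t)}), \]
so the theorem reduces to the claim that $\Phi_{|L|} \geq 0$. As preliminaries, I would first observe two easy structural facts: since $|R| \geq |L|$ (else no left-perfect matching exists), a length-$1$ augmenting path from $u_t$ to an exposed right vertex is always available, so $M_t$ is left-perfect on $L_t$ at every $t$; and since every augmenting path the algorithm chooses starts at $u_t$ and ends at an exposed right vertex, once a right vertex is matched by the algorithm it stays matched.

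I would then prove $\Phi_t \geq 0$ by induction on $t$, with the inductive step case-split on the status of $v^*_t$ in $M_{t-1}$. If $v^*_t$ is exposed in $M_{t-1}$, the length-$1$ path $(u_t, v^*_t)$ is an augmenting path available to the algorithm, so its greedy choice gives $\Delta_t := w(M_t) - w(M_{t-1}) \geq w(u_t, v^*_t)$ and hence $\Phi_t \geq \Phi_{t-1} + \tfrac{1}{2} w(u_t, v^*_t) \geq 0$. If $v^*_t$ is matched in $M_{t-1}$ to some $u_{t'}$ and $v^*_{t'}$ is exposed, then the length-$3$ augmenting path on $(u_t, v^*_t, u_{t'}, v^*_{t'})$ is available with profit $w(u_t, v^*_t) + w(u_{t'}, v^*_{t'}) - w(u_{t'}, v^*_t)$; here the proof needs to use that $(u_{t'}, v^*_t) \in M_{t-1}$ already contributes $w(u_{t'}, v^*_t)$ to $w(M_{t-1})$. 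Finally, if both $v^*_t$ and $v^*_{t'}$ are matched in $M_{t-1}$, the algorithm must resort to a shortcut through some other exposed right vertex, which may have much smaller profit.

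The main obstacle is this last subcase: $\Delta_t$ can be much smaller than $\tfrac{1}{2} w(u_t, v^*_t)$, so the naive invariant $\Phi_t \geq 0$ cannot be maintained step-by-step from $\Phi_{t-1} \geq 0$ alone. (Small examples show that even a local charging from $\mathsf{OPT}$-edges to the two incident algorithm edges of each $M$-edge can fail, ruling out an edge-local argument.) I therefore expect the proof to strengthen the invariant with a credit scheme that associates to each ``sticky'' algorithm edge $(u, v) \in M_t$ with $v \neq v^*_u$ a reserve proportional to $w(u, v)$, representing profit already captured by the algorithm that will be consumed when a future arrival targeting $v$ creates exactly the obstruction above. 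Verifying the monotonicity of this augmented potential when a single arrival simultaneously creates new sticky edges and consumes reserves along a chain in $M_t \triangle \mathsf{OPT}^{(t)}$ that may be longer than $3$ will be the core combinatorial content of the proof, and is precisely the point at which the extra power afforded by $k=4$ (enabling length-$3$ augmenting paths) substitutes for the triangle inequality required by the classical $\tfrac{1}{3}$-competitive analysis of the weighted problem without reassignment.
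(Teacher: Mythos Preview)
Your proposal correctly identifies the obstacle but does not overcome it: you explicitly defer the ``core combinatorial content'' to an unspecified credit scheme. As it stands this is a diagnosis, not a proof. The invariant $\Phi_t \geq 0$ provably fails step-by-step (as you note), and the vague promise to attach reserves to ``sticky'' edges $(u,v)\in M_t$ with $v\neq v^*_u$ is not enough---you would need to name the strengthened potential exactly and verify it survives every transition, including the case where the algorithm's chosen length-$3$ path reroutes \emph{away} from $\mathsf{OPT}$ entirely, so that new sticky edges are created whose weight bears no relation to any $w(u_i,v^*_i)$. Even your middle case is incomplete: knowing that $(u_{t'},v^*_t)$ contributes $w(u_{t'},v^*_t)$ to $w(M_{t-1})$ does not by itself make $\Phi_t\geq 0$, because that weight is already counted in $\Phi_{t-1}$.

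The paper sidesteps the per-step invariant altogether. For each right vertex $v$ it defines the \emph{loss}
\[
\ell_v^t \;:=\; w(M_t(v),v)\;-\;\max_{v'\in R^t_{\mathrm{exp}}} w(M_t(v),v')
\]
(and $0$ if $v$ is exposed), observes that the greedy choice immediately gives $p_t \geq w(u_t,M^\star(u_t)) - \ell^t_{M^\star(u_t)}$, and then proves the key lemma: $\ell_v^t$ is \emph{non-decreasing in $t$} for every fixed $v$. Summing over $t$ yields $w(M)+\sum_t \ell^t_{M^\star(u_t)} \geq w(M^\star)$; monotonicity lets one replace each $\ell^t_{M^\star(u_t)}$ by its final value, which is at most $w\bigl(M(M^\star(u_t)),\,M^\star(u_t)\bigr)$ since weights are nonnegative. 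Because $M^\star$ is a matching, these final losses sum to at most $w(M)$, giving $2w(M)\geq w(M^\star)$. This global monotonicity is precisely the ``credit'' you were reaching for, but packaged so that no per-step potential needs to stay nonnegative: the missing idea is to track loss at \emph{right vertices} rather than credit on algorithm edges, and to prove it only ever grows.
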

\addtocounter{theorem}{-1}
\endgroup

\subparagraph{Notation and Definitions.}
Let $G = (L \cup R, E)$ be a complete bipartite graph, and $w : E \rightarrow \mathbb{Q}_+$ be the edge weights.
Let $u_t\in L$ denote the vertex that arrives at time $t$ and $n:=|L|$ be the number of timesteps.
The algorithm is given $R$ (and $k = 4$) in advance.
Since the model requires the algorithm to output a left-perfect matching, we have $n \leq |R|$.

Let $M_t$ be the matching that the algorithm maintains at the beginning of time $t$, i.e., one that the algorithm outputs at time $t - 1$.
Let $R^t_\exposed$ be the exposed vertices in $R$ at the beginning of time $t$.
Here, the beginning of time $n + 1$ indicates the end of the entire execution; i.e., $M_{n + 1}$ is the final left-perfect matching the algorithm produces.
Given a matching $M$ in $G$, we slightly abuse the notation to let $M(u)$ denote the vertex with which $u\in L \cup R$ is matched by $M$.

Let $p_t$ denote the \emph{marginal profit} that the algorithm achieves at time $t$, i.e., $p_t := \sum_{e \in M_{t + 1}} w(e) - \sum_{e \in M_t} w(e)$ for $1 \leq t \leq n$.
Fahrbach and Zadimoghaddam~\cite{DBLP:journals/corr/Zadimoghaddam17} uses this marginal profit (or marginal gain) to analyze an online algorithm for the weighted problem under the free disposal model.

Suppose that, at time $t$, the algorithm augments the matching at the path $ \langle u_t, v, u, v^\prime \rangle $.
This implies that $(u, v) \in M_t$ and $v^\prime \in R^t_\exposed$. Since the algorithm chose $v$ over $v^\prime$ to be matched with $u$, we have $w(u, v) \geq w(u, v^\prime)$. Thus, swapping  $(u, v)$ out with $(u, v^\prime)$ incurs the \emph{loss} of  $w(u, v) - w(u, v^\prime)\geq 0$ but $w(u_t, v)$ is large enough to cover this loss.
Let us define the \emph{loss from vertex $v\in R$ at time $t$}, denoted by $\ell_v^t$, to be the loss that would incur if the algorithm decides to take $v$ away from $M_t(v)$ and $M_t(v)$ needs to find a new match: i.e.,
\begin{equation*}
\ell_v^t := \begin{cases}
w(M_t(v), v) -  \max\limits_{v^\prime \in R^t_\exposed} w(M_t(v), v^\prime), & \text{if } v \text{ is matched at the beginning of time } t, \\
0, & \text{otherwise},
\end{cases}
\end{equation*}
where $\max \emptyset := 0$.

The following lemma bounds the marginal profit of each timestep.
Let $M^\star$ be a maximum-weight left-perfect matching in $G$.
\begin{lemma} \label{analysis-weight-comp:lem-gain}
For each timestep $t$, we have $p_t \geq w(u_t, M^\star(u_t)) - \ell_{M^\star(u_t)}^t$.
\end{lemma}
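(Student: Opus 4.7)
The plan is to analyze the behavior of the algorithm by considering the specific augmenting path that involves the optimal partner $v^\star := M^\star(u_t)$ of the newly arrived vertex $u_t$, and observe that the algorithm's choice of a most profitable augmenting path of length at most $k-1=3$ must do at least as well. This reduces to a clean two-case analysis depending on whether $v^\star$ is exposed or matched at the beginning of time $t$.

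First I would handle the easy case in which $v^\star \in R^t_\exposed$. Here, $\ell_{v^\star}^t = 0$ by definition, and the single edge $\langle u_t, v^\star\rangle$ is itself an augmenting path in $G$ with respect to $M_t$ of length $1 \le k-1$. Its profit equals $w(u_t, v^\star)$, so since the algorithm picks a most profitable augmenting path of length at most $3$ starting from $u_t$, we get $p_t \ge w(u_t, v^\star) = w(u_t, M^\star(u_t)) - \ell_{M^\star(u_t)}^t$.

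Next I would handle the case in which $v^\star$ is matched at the beginning of time $t$; let $u := M_t(v^\star)$. I would first observe that $R^t_\exposed \ne \emptyset$: since the algorithm maintains a left-perfect matching on the vertices arrived so far, $|M_t| = t-1$, so $|R^t_\exposed| = |R| - (t-1) \ge n - (t-1) \ge 1$. Therefore there exists $v' \in R^t_\exposed$ attaining $\max_{v'' \in R^t_\exposed} w(u, v'')$, and $\langle u_t, v^\star, u, v'\rangle$ is an augmenting path of length $3$ in $G$ with respect to $M_t$. Augmenting along it changes the matching profit by
\[
w(u_t, v^\star) + w(u, v') - w(u, v^\star) \;=\; w(u_t, v^\star) - \ell_{v^\star}^t,
\]
by the definition of $\ell_{v^\star}^t$. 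Since the algorithm chooses a most profitable augmenting path of length at most $3$ starting from $u_t$, $p_t$ is at least this quantity, giving $p_t \ge w(u_t, M^\star(u_t)) - \ell_{M^\star(u_t)}^t$.

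There is no real obstacle here; the only subtle point is to justify that the length-$3$ candidate path is available in the matched case, which is where the left-perfect assumption enters through the count $|R^t_\exposed| \ge 1$. The structure of the argument simply exploits the fact that the algorithm's chosen path is at least as profitable as the particular ``swap'' path targeting $v^\star$, and that this swap's profit is exactly $w(u_t, M^\star(u_t)) - \ell_{M^\star(u_t)}^t$ by construction of $\ell$.
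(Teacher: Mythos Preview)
Your proof is correct and follows essentially the same two-case argument as the paper: exhibit the specific augmenting path through $v^\star=M^\star(u_t)$ (length~1 if $v^\star$ is exposed, length~3 otherwise) and use that the algorithm's choice is at least as profitable. You are in fact slightly more careful than the paper in explicitly verifying $R^t_\exposed\neq\emptyset$ in the matched case, which is needed for the length-3 candidate path to exist.
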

\begin{proof}
Let $v := M^\star(u_t)$ be the vertex to which $u_t$ is matched by the optimal left-perfect matching.
Observe that the right-hand side of the inequality represents the profit of matching $u_t$ to $v$ at time $t$:
if $v$ is exposed at the beginning of time $t$, the algorithm can simply add $( u_t, v )$ to the matching, obtaining the profit of  $w(u_t, v)$.
Otherwise, the algorithm can profit at most $w(u_t, v) - w(M_t(v), v) + \max_{v^\prime \in R^t_\exposed} w(M_t(v), v^\prime)= w(u_t, v) - \ell_{v}^t$ if it swaps out $(M_t(v), v)$, matches $M_t(v)$ somewhere else, and adds $( u_t, v )$ to the matching.
Recall that the algorithm finds the most profitable augmenting path of length at most 3 at each timestep.
\end{proof}

Now we show the monotonicity of the loss from each vertex $v\in R$ over time.

\begin{lemma} \label{analysis-weight-comp:lem-non-dec}
For each $v \in R$, $\ell_v^t$ is non-decreasing over $t$.
\end{lemma}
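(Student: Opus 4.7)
The plan is to fix a vertex $v \in R$ and perform a case analysis on how $v$ is affected by the algorithm's action at time $t$. If the augmenting path $P$ chosen by the algorithm does not contain $v$ (including the case where no augmentation is performed), then the status of $v$---exposed vs.\ matched, and with whom if matched---is unchanged from time $t$ to time $t+1$, while $R^{t+1}_\exposed \subseteq R^t_\exposed$ since once-matched $R$-vertices stay matched. The maximum in the definition of $\ell_v^{t+1}$ is therefore taken over a subset of that used for $\ell_v^t$ with the same partner, yielding $\ell_v^{t+1} \geq \ell_v^t$ immediately.

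When $v$ lies on $P$, I would write $P = \langle u_t, v_1, u, v_2 \rangle$ in the length-$3$ case and $P = \langle u_t, v_1 \rangle$ in the length-$1$ case, and then invoke the defining property of the algorithm: $P$ is the most profitable augmenting path of length at most $3$ starting at $u_t$. This yields two comparisons that drive the rest of the analysis. First, comparing $P$ against the single-edge path $\langle u_t, z' \rangle$ for each $z' \in R^t_\exposed$ gives $\mathrm{profit}(P) \geq w(u_t, z')$. Second, in the length-$3$ case, comparing $P$ against $\langle u_t, v_1, u, z' \rangle$ for each $z' \in R^t_\exposed$ gives $w(u, v_2) \geq w(u, z')$; in particular, $v_2$ attains $\max_{z' \in R^t_\exposed} w(u, z')$.

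The three sub-cases now close quickly. If $v = v_2$, then $v$ was exposed at time $t$ and becomes matched to $u$ at time $t+1$; the second inequality together with $R^{t+1}_\exposed \subseteq R^t_\exposed$ yields $\ell_v^{t+1} = w(u, v) - \max_{z' \in R^{t+1}_\exposed} w(u, z') \geq 0 = \ell_v^t$. If $v = v_1$ in the length-$1$ case, the first inequality analogously gives $\ell_v^{t+1} \geq 0 = \ell_v^t$. The principal sub-case is $v = v_1$ in the length-$3$ case: the second inequality pins down $\ell_v^t = w(u, v) - w(u, v_2)$, and rearranging the first inequality yields $w(u_t, v) - w(u_t, z') \geq w(u, v) - w(u, v_2) = \ell_v^t$ for every $z' \in R^t_\exposed$, so maximizing over $z' \in R^{t+1}_\exposed \subseteq R^t_\exposed$ gives $\ell_v^{t+1} \geq \ell_v^t$. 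I expect the main obstacle to be the bookkeeping in this last sub-case---verifying that the algorithm's selection of $v_2$ as an \emph{optimal} exposed replacement for $u$ is exactly what compensates for the simultaneous change of $v$'s partner (from $u$ to $u_t$) and the shrinkage of the exposed set (from $R^t_\exposed$ to $R^{t+1}_\exposed$) in the loss accounting.
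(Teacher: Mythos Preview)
Your proposal is correct and takes essentially the same approach as the paper's proof: both proceed by case analysis on whether $v$ lies on the chosen augmenting path and both hinge on the same two optimality comparisons (against the single-edge alternative $\langle u_t, z'\rangle$ and against the length-$3$ alternative $\langle u_t, v_1, u, z'\rangle$). The only cosmetic differences are that the paper organizes the cases by whether $v$ is newly matched versus already matched (rather than by position on $P$), and it explicitly separates out the boundary case $R^{t+1}_\exposed=\emptyset$, which your argument absorbs implicitly via the convention $\max\emptyset=0$ and nonnegativity of the weights.
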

\begin{proof}
Suppose that $v$ newly becomes matched at time $t$.
Since $v$ is exposed at the beginning of this timestep, we have $\ell^t_v = 0$ by definition.
If $\ell^{t + 1}_v < 0$, this implies there is $v^\prime \in R^{t + 1}_\exposed$ such that $w(M_{t + 1}(v), v^\prime) > w(M_{t + 1}(v), v)$.
However, since $v^\prime$ is also exposed at the beginning of time $t$, the algorithm would not have chosen $v$ over $v'$ to match with $M_{t + 1}(v)$. This shows $\ell^{t + 1}_v \geq 0$.

Now suppose that $v$ was matched at the beginning of time $t$.
Since every vertex will never become exposed once it is matched, we have $R^{t + 1}_\exposed \subseteq R^t_\exposed$.
Therefore, if $v$ does not participate in the augmenting path chosen by the algorithm at time $t$ and hence $M_{t + 1}(v)=M_{t }(v)$, we can easily see that $\ell_v^{t + 1} \geq \ell_v^t$.

The only remaining case is when $(M_t(v), v)$ is on the augmenting path the algorithm adopts at time $t$.
Let $P := \langle u_t, v, M_t(v), v_\mathsf{tail} \rangle$ denote this path, where $v_\mathsf{tail}\in R^{t + 1}_\exposed$ is an exposed vertex in $R$ at the beginning of time $t$.
Observe that $v_\mathsf{tail} \in \argmax_{v^\prime \in R^t_\exposed} w(M_t(v), v^\prime)$ by construction of the algorithm.

Let us assume for now that $R^{t + 1}_\exposed \neq \emptyset$.
Let $v_\mathsf{max} \in \argmax_{v^\prime \in R^{t+1}_\exposed} w(u_t, v^\prime)$.
Note that $v_\mathsf{max}$ is also exposed at the beginning of time $t$, implying that the algorithm could have directly matched $u_t$ with $v_\mathsf{max}$, but chose to use $P$ at time $t$.
We thus have
\begin{equation*}
w(u_t, v) - w(M_t(v), v) + w(M_t(v), v_\mathsf{tail}) \geq w(u_t, v_\mathsf{max}),
\end{equation*}
yielding $ \ell^{t + 1}_v = w(u_t, v) - w(u_t, v_\mathsf{max}) \geq w(M_t(v), v) - w(M_t(v), v_\mathsf{tail}) = \ell^t_v$, where the first equality comes from that $u_t$ is the vertex where $v$ is matched by $M_{t + 1}$.

Observe that $R^{t + 1}_\exposed = \emptyset$ only if $t = |R|$.
In this case, since $u_t$ could have been directly matched with $v_\mathsf{tail}$, we have $w(u_t, v) - w(M_t(v), v) + w(M_t(v), v_\mathsf{tail}) \geq w(u_t, v_\mathsf{tail})\geq 0$. This gives $ \ell^{t + 1}_v = w(u_t, v)  \geq w(M_t(v), v) - w(M_t(v), v_\mathsf{tail}) = \ell^t_v. $
\end{proof}

We are now ready to prove Theorem~\ref{analysis-weight-comp:main}.
\begin{proof} [Proof of Theorem~\ref{analysis-weight-comp:main}]
Let $M := M_{n + 1}$ be the left-perfect matching the algorithm produces at the end.
From Lemma~\ref{analysis-weight-comp:lem-gain}, we have
\begin{equation}\label{eq:analysis-weight-comp:lem-gain:1}
\sum_{t = 1}^n p_{t} + \sum_{t = 1}^n \ell^t_{M^\star(u_t)} \geq \sum_{t = 1}^n w(u_{t}, M^\star(u_{t})) = \sum_{e \in M^\star} w(e).
\end{equation}
Observe that, by the definition of $p_t$, we have
\begin{equation}
\sum_{t = 1}^n p_t = \sum_{e \in M} w(e).
\end{equation}
Moreover, by Lemma~\ref{analysis-weight-comp:lem-non-dec}, we can derive
\begin{equation}\label{eq:analysis-weight-comp:lem-gain:3}
\sum_{t = 1}^n \ell^t_{M^\star(u_t)} \leq \sum_{t = 1}^n \ell^{n + 1}_{M^\star(u_t)} \leq \sum_{e \in M} w(e),
\end{equation}
where the last inequality comes from the fact that $M^\star$ is a left-perfect matching.
Combining \eqref{eq:analysis-weight-comp:lem-gain:1}--\eqref{eq:analysis-weight-comp:lem-gain:3} gives $2\sum_{e \in M} w(e) \geq \sum_{e \in M^\star} w(e)$.
\end{proof}

\subsection{Lower Bound} \label{subsec:analysis-weight-hard}
In this section, we present the lower bound under the weighted model.

\begingroup
\def\thetheorem{\ref{analysis-weight-hard:lem-hard}}
\begin{theorem} [restated]
For all constant $\epsilon > 0$, no deterministic algorithm is $(\frac{1}{2} + \epsilon)$-competitive for the maximum-weight online bipartite left-perfect matching problem under vertex arrivals with $k = 4$, even when all edge weights are either $0$ or $1$.
\end{theorem}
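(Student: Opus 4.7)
The plan is to construct an adversary that simulates $N$ disjoint copies of a ``$\frac{1}{2}$-gadget'' for a sufficiently large integer $N$, in the same spirit as the proof of Theorem~\ref{analysis-cardinality-hard:lem-hard}. Each gadget will give the algorithm a total weight of $1$ while the offline optimum on the gadget is $2$, and every inter-gadget edge will have weight $0$ so the gadgets do not interact. As in the cardinality lower bound, we may assume an adaptive offline adversary without loss of generality.

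For a single gadget $g$, I would introduce four new vertices $u^g_1, u^g_2, u^g_3, u^g_4$ in $L$ and reveal them in four consecutive timesteps. First the adversary reveals $u^g_1$ and $u^g_2$ with all weight-$0$ incident edges; to maintain the left-perfect matching the algorithm must match them to some exposed vertices, which the adversary observes and calls $V_1$ and $V_2$. Next, $u^g_3$ is revealed with $w(u^g_3, V_1)=w(u^g_3, V_2)=1$ and all other incident edges of weight $0$. A direct match gains $0$, while the two length-$3$ augmenting paths $\langle u^g_3, V_1, u^g_1, v^*\rangle$ and $\langle u^g_3, V_2, u^g_2, v^*\rangle$ each gain $1$, so the algorithm takes one of them; let $V_a \in \{V_1, V_2\}$ denote the pivot vertex that the algorithm chooses. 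The adversary then reveals $u^g_4$ with $w(u^g_4, V_a) = 1$ and all other edges weight~$0$. After time $3$, $V_a$ is matched to $u^g_3$ and the other of $V_1, V_2$ is matched to $u^g_1$ or $u^g_2$; therefore, in the only candidate length-$3$ augment $\langle u^g_4, V_a, u^g_3, v^{**}\rangle$, the exposed $v^{**}$ satisfies $w(u^g_3, v^{**})=0$ (since both weight-$1$ neighbors of $u^g_3$ are already matched), giving gain $1+0-1=0$, and longer augments are forbidden by $k=4$. The algorithm therefore matches $u^g_4$ to a weight-$0$ exposed vertex, and the gadget's final algorithmic weight is $1$. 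In contrast, the offline optimum matches $u^g_3$ to the unused of $\{V_1, V_2\}$ and $u^g_4$ to $V_a$, sending $u^g_1, u^g_2$ to weight-$0$ exposed vertices, for a total weight of $2$.

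Running $N$ gadgets sequentially then yields $\mathsf{ALG} \leq N$ and $\mathsf{OPT} \geq 2N$. Supposing for contradiction that the algorithm satisfies $\mathsf{ALG} \geq (\frac{1}{2} + \epsilon)\mathsf{OPT} - c$ for some constant $c \geq 0$, choosing $N > c/(2\epsilon)$ yields $N \geq N + 2\epsilon N - c$, a contradiction; we must choose $|R|$ at least $4N$ to accommodate the left-perfect matching across all gadgets, which is straightforward. The main technical point to verify is that at every timestep, no augmenting path of length $\leq 3$ that uses vertices from other gadgets gives a better gain than the intra-gadget analysis assumes. Since each $u^g_i$'s only weight-$1$ incident edges lie inside gadget $g$ and all inter-gadget edges are weight $0$, any augment crossing gadget boundaries accrues $0$ from its cross-gadget hops, which can be checked to make the ``escape'' augments non-positive (indeed, negative when they would displace an existing weight-$1$ match); hence the intra-gadget augments dominate at every step and the per-gadget analysis composes cleanly.
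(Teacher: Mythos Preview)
Your gadget idea is natural, but the construction as written has a genuine flaw that lets an adversarial algorithm defeat the lower bound.

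The problem is the sentence ``to maintain the left-perfect matching the algorithm must match them to some exposed vertices.'' This is false when $k=4$: upon the arrival of $u_1^g$ (with all-zero edges) the algorithm may perform a length-$3$ augmenting path $\langle u_1^g, V, M(V), v''\rangle$ through a \emph{previously matched} vertex $V$ from an earlier gadget, so that $V_1^g := M(u_1^g) = V$. In particular the algorithm can force $V_1^g$ to be the \emph{same} right-vertex $V$ for every gadget $g$. Concretely, consider the algorithm that, in every gadget, (i) matches $u_1^g$ to the fixed vertex $V$ via a length-$3$ augment, (ii) matches $u_2^g$ to a fresh exposed vertex $V_2^g$, (iii) at step 3 augments along $\langle u_3^g, V_2^g, u_2^g, \cdot\rangle$ so that $V_a^g = V_2^g$, and (iv) matches $u_4^g$ to a fresh exposed vertex. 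Then the set of all weight-$1$ edges in the final graph is $\{(u_3^g,V),(u_3^g,V_2^g),(u_4^g,V_2^g):1\le g\le N\}$. Because all of the edges $(u_3^g,V)$ share the single right-endpoint $V$, the offline optimum is only $N+1$ (take $(u_4^g,V_2^g)$ for all $g$ and one $(u_3^{g_0},V)$), while the algorithm already achieves weight $N$. Hence $\mathsf{ALG}/\mathsf{OPT}\to 1$, and your adversary fails to witness a $\tfrac12$ bound. Your claimed optimal solution ``match $u_3^g$ to the unused of $\{V_1,V_2\}$'' is exactly where this collapses: those ``unused'' vertices need not be distinct across gadgets.

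A secondary gap is that your step-3 analysis (``so the algorithm takes one of them; let $V_a$ denote the pivot the algorithm chooses'') presumes the algorithm is gain-maximizing; a lower bound must cover every deterministic algorithm, including one that matches $u_3^g$ to a weight-$0$ vertex. This one is easy to patch, but the recycling issue above is not: you need a mechanism that keeps the named right-vertices distinct regardless of how the algorithm augments. The paper handles this by naming the $v_i$'s explicitly from $R\setminus\{v_1,\ldots,v_{t-1}\}$ and, in Phase~2, tying one weight-$1$ edge to an already-named vertex and the other to a freshly named one; the structural Observation~\ref{analysis-weight-hard:obsv-5aug} then bounds $\mathsf{ALG}$ without any greedy assumption.
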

\addtocounter{theorem}{-1}
\endgroup

We will present an adaptive offline adversary: recall that oblivious adversaries are as powerful as adaptive offline adversaries against deterministic algorithms (see, e.g.,~\cite{DBLP:journals/algorithmica/Ben-DavidBKTW94}).

\subparagraph{Adversary Description.}
Let $N$ be a sufficiently large number. The adversary will construct a graph with $|L|=|R|=3N$; it reveals the graph at each timestep as we describe below. 
Let $u_t$ denote the vertex that arrives at time $t$. On the way, the adversary will name the vertices in $R$ from $v_1$ to $v_{3N}$, although this does not necessarily relate $v_t$ with $u_t$.

\begin{description}
\item{\textsf{Phase 1.}} At time $t=1,\ldots,N$, the adversary reveals only zero-weight edges, i.e., $w(u_t, v) = 0$ for every vertex $v \in R$.
Let $v_t \in R$ be the vertex that got matched by the algorithm at time $t$.

\item{\textsf{Phase 2.}} At time $t=N + 1,\ldots,2N$, the adversary selects a vertex in $R\setminus\{v_1,\ldots,v_{t-1}\}$ to be named $v_t$: if there exists a vertex in $R\setminus\{v_1,\ldots,v_{t-1}\}$ that is matched by the algorithm at the beginning of time $t$, it selects that vertex as $v_t$; otherwise, it selects an arbitrary vertex from $R\setminus\{v_1,\ldots,v_{t-1}\}$. Then, the adversary reveals the edge weights as follows: $w(u_t,v_{t-N})=w(u_t,v_t)=1$ and $w(u_t,v)=0$ for all other $v\in R\setminus\{v_{t-N},v_t\}$.

\item{\textsf{Phase 3.}} At time $t=2N+1,\ldots,3N$, the adversary examines what the algorithm did on the two weight-1 edges incident with $u_{t-N}$, i.e., $(u_{t - N}, v_{t - 2N})$ and $ (u_{t - N}, v_{t - N}) $.
If one of these two edges is in the matching at the beginning of time $t$, the adversary renames these two vertices so that $(u_{t - N}, v_{t - N})$ is the edge in the matching: $(u_{t - N}, v_{t - 2N})$ is not in the matching.
Then, the adversary reveals the edge weights as follows: $w(u_t, v_{t - N})=1$ and $w(u_t,v)=0$ for all other $v\in R\setminus\{v_{t-N}\}$.
\end{description}

\begin{observation} \label{analysis-weight-hard:obsv-5aug}
Suppose that, at some point of time, an edge $(u,v)$ is not in the matching but both $u$ and $v$ are matched. Then, this edge cannot enter the matching for the rest of the algorithm's execution.
\end{observation}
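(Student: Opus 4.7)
My plan is to prove this by induction on time, showing that at every timestep $t' \geq t$, the edge $(u,v)$ is not in the matching $M_{t'}$. The base case $t' = t$ is the hypothesis of the observation. Before starting the induction, I would first record the fact that once a vertex becomes matched the algorithm's commitment rule keeps it matched forever; so $u$ and $v$ are both matched (not necessarily to each other) at every time $t' \geq t$.

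For the inductive step, suppose $(u,v) \notin M_{t'}$ and consider the transition from $M_{t'}$ to $M_{t'+1}$. Since the algorithm augments along at most one augmenting path $P$ starting from the newly arrived vertex $u_{t'}$, and since $k = 4$ forces $|P| \leq 3$, there are essentially only two cases: either $P$ has one edge, $P = \langle u_{t'}, w \rangle$ with $w \in R$ exposed; or $P$ has three edges, $P = \langle u_{t'}, v_1, u_1, v_2 \rangle$ with $(u_1, v_1) \in M_{t'}$ and $v_2 \in R$ exposed. In either case, the edges newly entering the matching are precisely those incident with $u_{t'}$ or with an exposed endpoint.

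The key observation is then that $(u,v)$ cannot be any of these newly added edges. Indeed, if $(u,v)$ equals the edge $(u_{t'}, v_1)$ (or $(u_{t'}, w)$ in the length-1 case), then one of $u,v$ would have to equal $u_{t'}$; but $u_{t'}$ is a vertex that just arrived and hence was not matched before time $t'$, contradicting the fact that both $u$ and $v$ have been matched since time $t$. If instead $(u,v) = (u_1, v_2)$, then one of $u,v$ would equal the exposed endpoint $v_2 \in R^{t'}_{\exposed}$, again contradicting the fact that both $u$ and $v$ are matched at time $t'$. Thus $(u,v) \notin M_{t'+1}$, completing the induction.

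The argument is essentially a routine case analysis; I do not anticipate any real obstacle, as long as one carefully invokes the structural properties of the algorithm (augmenting paths start from the newly arrived vertex, end at an exposed vertex, and have length at most $k-1 = 3$) together with the commitment invariant that matched vertices remain matched.
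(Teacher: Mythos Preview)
Your argument is logically sound as a statement about the specific augmenting-path algorithm of Section~\ref{sec:modalg}, but that is not what this observation is about. The observation lives in the \emph{lower-bound} section (Section~\ref{subsec:analysis-weight-hard}) and is used in Lemma~\ref{analysis-weight-hard:lem-alg-half} to constrain \emph{every} deterministic online algorithm, not just the one analysed earlier. When you write ``the algorithm augments along at most one augmenting path $P$ starting from the newly arrived vertex $u_{t'}$'' and later appeal to ``the structural properties of the algorithm (augmenting paths start from the newly arrived vertex, end at an exposed vertex, and have length at most $k-1=3$),'' you are invoking the particular algorithm; an arbitrary online algorithm is under no obligation to operate by augmenting along a single shortest path.

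The paper's proof avoids this pitfall by arguing directly from the model constraints, with no reference to how the algorithm chooses its updates: if both $u$ and $v$ are already matched and one wishes to insert $(u,v)$, the two current matching edges at $u$ and at $v$ must leave the matching, and the commitment rule forces their other endpoints to be re-matched as well; hence at least six (re)assignments are required, exceeding the budget $k=4$. This reasoning applies to any online algorithm. Your inductive case analysis can in fact be salvaged for arbitrary algorithms, but only after first proving that in the left-perfect model with $k=4$ every legal transition is necessarily a single augmenting path of length $1$ or $3$ starting at the newly arrived vertex (the new vertex must be matched, it is the unique exposed $L$-vertex, and no alternating cycle or second path fits in the remaining budget). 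That extra step is more work than the paper's direct count.
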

\begin{proof}
In order to add $(u,v)$ to the matching, we need to remove the two edges in the matching incident with $u$ and $v$. We must honor the commitments we made to the other endpoints of these two edges, implying that adding $(u,v)$ to the matching requires we use an augmenting path of length at least 5. Recall that the budget is $k=4$.
\end{proof}

\begin{lemma} \label{analysis-weight-hard:lem-alg-half}
Against the given adversary, every online algorithm produces a perfect matching whose weight is at most $N + 1$.
\end{lemma}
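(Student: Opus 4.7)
The plan is to show the algorithm's final matching contains at most $N+1$ weight-$1$ edges by grouping them into $N$ triples, one per index $i \in [N]$, and bounding each triple's contribution.

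First I would set up the counting. Let $a_i$ denote the Phase 1 match of $u_i$ and $b_i$ denote the adversary's Phase 2 choice $v_{N+i}$ at $t=N+i$. The revealed weight-$1$ edges are exactly $(u_{N+i}, a_i)$, $(u_{N+i}, b_i)$, and $(u_{2N+i}, c_i)$, where $c_i$ is whichever of $\{a_i, b_i\}$ is matched to $u_{N+i}$ at the start of $t=2N+i$. The sets $\{a_i, b_i\}$ for distinct $i$ are pairwise disjoint by the adversary's rule for selecting $b_i$ from $R \setminus \{v_1, \dots, v_{N+i-1}\}$, and $a_i$ is always matched (to some $u$). Hence the total weight equals $\sum_{i=1}^N w_i$, where $w_i \in \{0,1,2\}$ counts the weight-$1$ edges in the final matching incident to $\{u_{N+i}, u_{2N+i}\}$.

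Next I would establish a ``stickiness'' property: once $u_{N+i}$ is first matched at time $N+i$ it lands in $\{a_i, b_i\}$ (since its best augmentation at that step has profit $1$, achieved only by paths using an edge in that set), and it remains matched to $\{a_i, b_i\}$ at least until $t = 2N+i$. An augmenting path moving $u_{N+i}$ outside $\{a_i, b_i\}$ at an intermediate step $t' \in (N+i, 2N+i)$ would have strictly negative profit, because the weight-$1$ neighborhood of $u_{t'}$ is disjoint from $\{a_i, b_i\}$ for $t' \neq N+i, 2N+i$, and the algorithm never prefers such a move over a profit-$0$ direct match. Hence $c_i \in \{a_i, b_i\}$ at $t=2N+i$. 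I would then show that $w_i = 2$ iff the algorithm executes \emph{Case B1 at $i$}: the vertex in $\{a_i, b_i\}$ not matched to $u_{N+i}$ is exposed and the algorithm augments along $\langle u_{2N+i}, c_i, u_{N+i}, v_i\rangle$ for profit $1$; otherwise $w_i = 1$. The stability of these values after $t = 2N+i$ follows from Observation~\ref{analysis-weight-hard:obsv-5aug} together with a similar profit-loss argument.

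The crux is to prove that at most one index $i$ can be Case B1, which yields $\sum w_i \leq (N-1) + 2 = N+1$. I would introduce the invariant that at the start of each timestep $t \leq 2N+1$, the matched $R$-vertices (as identities) form exactly the set $\{v_1, \dots, v_{t-1}\}$, so no ``extra'' matched vertex exists. The invariant trivially holds through Phase 1, and at Phase 2 step $N+j$ it implies $b_j$ is exposed; the algorithm's profit-$1$ options are either a direct match of $u_{N+j}$ to $b_j$ or an augmentation along $\langle u_{N+j}, a_j, u_j, v'\rangle$ for an algorithm-chosen exposed $v'$. The invariant is preserved iff the newly-matched $R$-vertex is $b_j$, i.e., the algorithm uses the direct match or picks $v' = b_j$. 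Since Case B1 at $i$ requires both $u_{N+i}$ matched to $a_i$ (so the augmentation, not the direct match) and $b_i$ exposed at $2N+i$ (so $v' \neq b_i$), it forces the invariant to break precisely at step $N+i$.

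Finally, I would run the cascade. Once the invariant breaks at $N+i$ via a fresh vertex $v'$ matched to $u_i$, the adversary's selection rule at step $N+i+1$ must set $b_{i+1} = v'$, so $b_{i+1}$ is already matched and remains so. By induction every $b_m$ with $m > i$ is matched by the time its Phase 3 step begins, ruling out Case B1 at any $m > i$. Symmetrically, a Case B1 at any $i' < i$ would, by the same cascade, force $b_i$ matched, contradicting Case B1 at $i$. Hence at most one index is Case B1, and the bound follows. The main obstacle will be handling the algorithm's freedom at profit-$0$ (and also profit-$1$ non-invariant-preserving) moves: in particular the algorithm might choose the endpoint $v'$ of a Phase 2 aug path, or the exposed endpoint in an intermediate Phase 3 step, to be $b_i$, which would destroy the first Case B1 opportunity---but the cascade argument is precisely what precludes any such choice from creating a \emph{second} Case B1 opportunity while the first is preserved.
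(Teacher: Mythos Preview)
Your overall counting structure---partitioning the weight-$1$ edges into $N$ triples $E_i=\{(u_{N+i},v_i),(u_{N+i},v_{N+i}),(u_{2N+i},v_{N+i})\}$ and arguing that at most one triple contributes two edges---is exactly the paper's. The gap is that the lemma concerns \emph{every} online algorithm subject to $k=4$ and the left-perfect constraint, whereas your argument repeatedly assumes the algorithm takes profit-maximizing augmentations. Your stickiness claim (``once $u_{N+i}$ is first matched it lands in $\{a_i,b_i\}$'') and the assertion that ``the algorithm never prefers a negative-profit move over a profit-$0$ direct match'' are simply false for a general algorithm: nothing in the model forbids matching $u_{N+i}$ to a weight-$0$ vertex or taking a losing augmentation. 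Because the adversary is adaptive, you also cannot argue ``without loss of generality greedy, since non-greedy only does worse''---a different algorithm induces a different instance. Relatedly, your $c_i$ is undefined when $u_{N+i}$ is matched outside $\{a_i,b_i\}$, and the Phase-3 renaming (which you do not track) is precisely what handles that case.

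The paper's proof discards all profit reasoning. Its invariant is a pure counting statement: at the beginning of each Phase-2 step $t$, at most one vertex of $R\setminus\{v_1,\ldots,v_{t-1}\}$ is matched, because the adversary's naming rule absorbs any such extra matched vertex into $v_t$ and each step creates exactly one newly matched $R$-vertex. By counting, at most one of $v_{N+1},\ldots,v_{2N}$ is exposed at the end of Phase~2. For every other index $i$, both $v_i$ and $v_{N+i}$ are matched at the start of step $2N+i$; after the renaming, Observation~\ref{analysis-weight-hard:obsv-5aug} forces at most one edge of $E_i$ into the final matching---regardless of how the algorithm behaved. Your cascade is morally this same invariant, but the profit-based steps you use to reach ``$b_i$ exposed'' and to characterize Case~B1 are unnecessary and, as written, invalid for arbitrary algorithms.
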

\begin{proof}
By construction, $v_1,\ldots,v_N$ are all matched by the end of Phase 1. For $t=N+1,\ldots,2N+1$, we can show by induction on $t$ that, at the beginning of time $t$, there can be at most one vertex in $R\setminus\{v_1,\ldots,v_{t-1}\}$ that is matched. This implies that at most one vertex in $\{v_{N+1},\ldots,v_{2N}\}$ is exposed at the end of time $2N$. Let $T:=\{t\mid 2N+1\leq t\leq 3N, v_{t-N}\textrm{ is matched at the end of time }2N\}$ and we have $|T|\geq N-1$.

Now consider an arbitrary timestep $t\in T$. Note that both $v_{t-2N}$ and $v_{t-N}$ are matched at the beginning of time $t$.
We now claim that at most one of the three edges $E_t:=\{ (u_{t - N}, v_{t - 2N}), (u_{t - N}, v_{t - N}), (u_{t}, v_{t - N}) \}$ can be in the matching at the end of time $3N$.

If at least one of $(u_{t-N},v_{t-2N})$ and $(u_{t-N},v_{t-N})$ is in the matching, it must be $(u_{t-N},v_{t-N})$ due to the renaming. Since $(u_{t-N},v_{t-2N})$ is not in the matching at the beginning of time $t$, Observation~\ref{analysis-weight-hard:obsv-5aug} implies that $(u_{t-N},v_{t-2N})$ can never enter the matching for the rest of the algorithm's execution. On the other hand, since $(u_{t - N}, v_{t - N})$ and $(u_{t}, v_{t - N})$ share an endpoint, at most one of them can be in the matching, proving the claim.

Now suppose that neither $(u_{t-N},v_{t-2N})$ nor $(u_{t-N},v_{t-N})$ is in the matching. Observation~\ref{analysis-weight-hard:obsv-5aug} implies that these two edges can never enter the matching, proving the claim again.

Note that $E_t:=\{ (u_{t - N}, v_{t - 2N}), (u_{t - N}, v_{t - N}), (u_{t}, v_{t - N}) \}$ are mutually disjoint for $t=2N + 1,\ldots, 3N$, and these are the only edges with weight 1. On the other hand, for all $t\in T$, at most one edge in $E_t$ can be in the final matching. Since $|T|\geq N-1$, this shows that the perfect matching output by the algorithm at the end of the execution has weight at most $N + 1$.
\end{proof}

\begin{proof}[Proof of Theorem~\ref{analysis-weight-hard:lem-hard}]
Note that the graph constructed by the adversary has matching of weight $2N$: choose $(u_{t - N}, v_{t - 2N})$ and $(u_{t}, v_{t - N})$ for all $t=2N+1,\ldots,3N$. The desired conclusion immediately follows from Lemma~\ref{analysis-weight-hard:lem-alg-half} by choosing a sufficiently large $N$.
\end{proof}

\bibliographystyle{plainurl}
\bibliography{lit}

\begin{thebibliography}{10}

\bibitem{DBLP:conf/soda/AggarwalGKM11}
Gagan Aggarwal, Gagan Goel, Chinmay Karande, and Aranyak Mehta.
\newblock Online vertex-weighted bipartite matching and single-bid budgeted
  allocations.
\newblock In {\em Proceedings of the 22nd Annual {ACM-SIAM} Symposium on
  Discrete Algorithms ({SODA})}, pages 1253--1264. {SIAM}, 2011.

\bibitem{DBLP:conf/mfcs/0001DJ18}
Spyros Angelopoulos, Christoph D{\"{u}}rr, and Shendan Jin.
\newblock Online maximum matching with recourse.
\newblock In {\em 43rd International Symposium on Mathematical Foundations of
  Computer Science ({MFCS})}, volume 117, pages 8:1--8:15. Schloss Dagstuhl -
  Leibniz-Zentrum f{\"{u}}r Informatik, 2018.

\bibitem{DBLP:journals/ipl/AvitabileMP13}
T.~Avitabile, C.~Mathieu, and L.~Parkinson.
\newblock Online constrained optimization with recourse.
\newblock {\em Information Processing Letters}, 113(3):81--86, 2013.

\bibitem{DBLP:journals/algorithmica/BansalBGN14}
Nikhil Bansal, Niv Buchbinder, Anupam Gupta, and Joseph Naor.
\newblock A randomized {$O(\log^2 k)$}-competitive algorithm for metric
  bipartite matching.
\newblock {\em Algorithmica}, 68(2):390--403, 2014.

\bibitem{DBLP:journals/algorithmica/Ben-DavidBKTW94}
Shai Ben{-}David, Allan Borodin, Richard~M. Karp, G{\'{a}}bor Tardos, and Avi
  Wigderson.
\newblock On the power of randomization in on-line algorithms.
\newblock {\em Algorithmica}, 11(1):2--14, 1994.

\bibitem{DBLP:journals/jacm/BernsteinHR19}
Aaron Bernstein, Jacob Holm, and Eva Rotenberg.
\newblock Online bipartite matching with amortized {$O(\log^2 n)$}
  replacements.
\newblock {\em Journal of the {ACM}}, 66(5):37:1--37:23, 2019.

\bibitem{DBLP:conf/innovations/BernsteinKPPS17}
Aaron Bernstein, Tsvi Kopelowitz, Seth Pettie, Ely Porat, and Clifford Stein.
\newblock Simultaneously load balancing for every $p$-norm, with reassignments.
\newblock In {\em 8th Innovations in Theoretical Computer Science Conference
  ({ITCS})}, volume~67, pages 51:1--51:14. Schloss Dagstuhl - Leibniz-Zentrum
  f{\"{u}}r Informatik, 2017.

\bibitem{DBLP:journals/sigact/BirnbaumM08}
Benjamin~E. Birnbaum and Claire Mathieu.
\newblock On-line bipartite matching made simple.
\newblock {\em {SIGACT} News}, 39(1):80--87, 2008.

\bibitem{DBLP:conf/focs/BosekLSZ14}
Bartlomiej Bosek, Dariusz Leniowski, Piotr Sankowski, and Anna Zych.
\newblock Online bipartite matching in offline time.
\newblock In {\em 55th {IEEE} Annual Symposium on Foundations of Computer
  Science ({FOCS})}, pages 384--393. {IEEE} Computer Society, 2014.

\bibitem{DBLP:conf/latin/BosekLSZ18}
Bartlomiej Bosek, Dariusz Leniowski, Piotr Sankowski, and Anna
  Zych{-}Pawlewicz.
\newblock A tight bound for shortest augmenting paths on trees.
\newblock In {\em 13th Latin American Symposium on Theoretical Informatics
  ({LATIN})}, volume 10807, pages 201--216. Springer, 2018.

\bibitem{DBLP:conf/wads/BoyarFKL17}
Joan Boyar, Lene~M. Favrholdt, Michal Kotrb{\v{c}}{\'{\i}}k, and Kim~S. Larsen.
\newblock Relaxing the irrevocability requirement for online graph algorithms.
\newblock In {\em 15th International Symposium on Algorithms and Data
  Structures ({WADS})}, volume 10389, pages 217--228. Springer, 2017.

\bibitem{DBLP:conf/esa/BrubachSSX16}
Brian Brubach, Karthik~Abinav Sankararaman, Aravind Srinivasan, and Pan Xu.
\newblock New algorithms, better bounds, and a novel model for online
  stochastic matching.
\newblock In {\em 24th Annual European Symposium on Algorithms ({ESA})},
  volume~57, pages 24:1--24:16. Schloss Dagstuhl - Leibniz-Zentrum f{\"{u}}r
  Informatik, 2016.

\bibitem{DBLP:conf/infocom/ChaudhuriDKL09}
Kamalika Chaudhuri, Constantinos Daskalakis, Robert~D. Kleinberg, and Henry
  Lin.
\newblock Online bipartite perfect matching with augmentations.
\newblock In {\em 28th {IEEE} International Conference on Computer
  Communications ({INFOCOM})}, pages 1044--1052. {IEEE}, 2009.

\bibitem{DBLP:conf/soda/DevanurJK13}
Nikhil~R. Devanur, Kamal Jain, and Robert~D. Kleinberg.
\newblock Randomized primal-dual analysis of {RANKING} for online bipartite
  matching.
\newblock In {\em Proceedings of the 24th Annual {ACM-SIAM} Symposium on
  Discrete Algorithms ({SODA})}, pages 101--107. {SIAM}, 2013.

\bibitem{DBLP:journals/iandc/EpsteinLSW18}
Leah Epstein, Asaf Levin, Danny Segev, and Oren Weimann.
\newblock Improved bounds for randomized preemptive online matching.
\newblock {\em Information and Computation}, 259(1):31--40, 2018.

\bibitem{DBLP:journals/corr/Zadimoghaddam17}
Matthew Fahrbach and Morteza Zadimoghaddam.
\newblock Online weighted matching: Beating the 1/2 barrier.
\newblock {\em CoRR}, abs/1704.05384, 2019.

\bibitem{DBLP:conf/wine/FeldmanKMMP09}
Jon Feldman, Nitish Korula, Vahab~S. Mirrokni, S.~Muthukrishnan, and Martin
  P{\'{a}}l.
\newblock Online ad assignment with free disposal.
\newblock In {\em 5th International Workshop on Internet and Network Economics
  ({WINE})}, volume 5929, pages 374--385. Springer, 2009.

\bibitem{DBLP:conf/focs/FeldmanMMM09}
Jon Feldman, Aranyak Mehta, Vahab~S. Mirrokni, and S.~Muthukrishnan.
\newblock Online stochastic matching: Beating 1-1/e.
\newblock In {\em 50th Annual {IEEE} Symposium on Foundations of Computer
  Science ({FOCS})}, pages 117--126. {IEEE} Computer Society, 2009.

\bibitem{DBLP:conf/soda/GamlathKS19}
Buddhima Gamlath, Sagar Kale, and Ola Svensson.
\newblock Beating greedy for stochastic bipartite matching.
\newblock In {\em Proceedings of the 30th Annual {ACM-SIAM} Symposium on
  Discrete Algorithms ({SODA})}, pages 2841--2854. {SIAM}, 2019.

\bibitem{DBLP:conf/focs/GamlathKMSW19}
Buddhima Gamlath, Michael Kapralov, Andreas Maggiori, Ola Svensson, and David
  Wajc.
\newblock Online matching with general arrivals.
\newblock In {\em 60th {IEEE} Annual Symposium on Foundations of Computer
  Science ({FOCS})}, pages 26--37. {IEEE} Computer Society, 2019.

\bibitem{DBLP:conf/soda/GoelM08}
Gagan Goel and Aranyak Mehta.
\newblock Online budgeted matching in random input models with applications to
  adwords.
\newblock In {\em Proceedings of the 19th Annual {ACM-SIAM} Symposium on
  Discrete Algorithms ({SODA})}, pages 982--991. {SIAM}, 2008.

\bibitem{DBLP:conf/wads/GroveKKV95}
Edward~F. Grove, Ming{-}Yang Kao, P.~Krishnan, and Jeffrey~Scott Vitter.
\newblock Online perfect matching and mobile computing.
\newblock In {\em 4th International Workshop on Algorithms and Data Structures
  ({WADS})}, volume 955, pages 194--205. Springer, 1995.

\bibitem{DBLP:conf/icalp/GuptaGPW19}
Anupam Gupta, Guru Guruganesh, Binghui Peng, and David Wajc.
\newblock Stochastic online metric matching.
\newblock In {\em 46th International Colloquium on Automata, Languages, and
  Programming ({ICALP})}, volume 132, pages 67:1--67:14. Schloss Dagstuhl -
  Leibniz-Zentrum f{\"{u}}r Informatik, 2019.

\bibitem{DBLP:conf/soda/GuptaKS14}
Anupam Gupta, Amit Kumar, and Cliff Stein.
\newblock Maintaining assignments online: Matching, scheduling, and flows.
\newblock In {\em Proceedings of the 25th Annual {ACM-SIAM} Symposium on
  Discrete Algorithms ({SODA})}, pages 468--479. {SIAM}, 2014.

\bibitem{DBLP:conf/wine/HaeuplerMZ11}
Bernhard Haeupler, Vahab~S. Mirrokni, and Morteza Zadimoghaddam.
\newblock Online stochastic weighted matching: Improved approximation
  algorithms.
\newblock In {\em 7th International Workshop on Internet and Network Economics
  ({WINE})}, volume 7090, pages 170--181. Springer, 2011.

\bibitem{DBLP:journals/siamcomp/HopcroftK73}
John~E. Hopcroft and Richard~M. Karp.
\newblock An $n^{5/2}$ algorithm for maximum matchings in bipartite graphs.
\newblock {\em {SIAM} Journal on Computing}, 2(4):225--231, 1973.

\bibitem{DBLP:journals/corr/abs-1910-03287}
Zhiyi Huang.
\newblock Understanding {Z}adimoghaddam's edge-weighted online matching
  algorithm: {W}eighted case.
\newblock {\em CoRR}, abs/1910.03287, 2019.

\bibitem{DBLP:conf/stoc/0002KTWZZ18}
Zhiyi Huang, Ning Kang, Zhihao~Gavin Tang, Xiaowei Wu, Yuhao Zhang, and Xue
  Zhu.
\newblock How to match when all vertices arrive online.
\newblock In {\em Proceedings of the 50th Annual {ACM} {SIGACT} Symposium on
  Theory of Computing ({STOC})}, pages 17--29. {ACM}, 2018.

\bibitem{DBLP:conf/soda/0002PTTWZ19}
Zhiyi Huang, Binghui Peng, Zhihao~Gavin Tang, Runzhou Tao, Xiaowei Wu, and
  Yuhao Zhang.
\newblock Tight competitive ratios of classic matching algorithms in the fully
  online model.
\newblock In {\em Proceedings of the 30th Annual {ACM-SIAM} Symposium on
  Discrete Algorithms ({SODA})}, pages 2875--2886. {SIAM}, 2019.

\bibitem{DBLP:conf/icalp/0002TWZ18}
Zhiyi Huang, Zhihao~Gavin Tang, Xiaowei Wu, and Yuhao Zhang.
\newblock Online vertex-weighted bipartite matching: Beating 1-1/e with random
  arrivals.
\newblock In {\em 45th International Colloquium on Automata, Languages, and
  Programming ({ICALP})}, volume 107, pages 79:1--79:14. Schloss Dagstuhl -
  Leibniz-Zentrum f{\"{u}}r Informatik, 2018.

\bibitem{DBLP:journals/jal/KalyanasundaramP93}
Bala Kalyanasundaram and Kirk Pruhs.
\newblock Online weighted matching.
\newblock {\em Journal of Algorithms}, 14(3):478--488, 1993.

\bibitem{DBLP:conf/stoc/KarpVV90}
Richard~M. Karp, Umesh~V. Vazirani, and Vijay~V. Vazirani.
\newblock An optimal algorithm for on-line bipartite matching.
\newblock In {\em Proceedings of the 22nd Annual {ACM} Symposium on Theory of
  Computing ({STOC})}, pages 352--358. {ACM}, 1990.

\bibitem{DBLP:conf/esa/KesselheimRTV13}
Thomas Kesselheim, Klaus Radke, Andreas T{\"{o}}nnis, and Berthold
  V{\"{o}}cking.
\newblock An optimal online algorithm for weighted bipartite matching and
  extensions to combinatorial auctions.
\newblock In {\em 21st Annual European Symposium on Algorithms ({ESA})}, volume
  8125, pages 589--600. Springer, 2013.

\bibitem{DBLP:conf/icalp/KhullerMV91}
Samir Khuller, Stephen~G. Mitchell, and Vijay~V. Vazirani.
\newblock On-line algorithms for weighted bipartite matching and stable
  marriages.
\newblock In {\em 18th International Colloquium on Automata, Languages, and
  Programming ({ICALP})}, volume 510, pages 728--738. Springer, 1991.

\bibitem{DBLP:conf/wine/KorulaMZ13}
Nitish Korula, Vahab~S. Mirrokni, and Morteza Zadimoghaddam.
\newblock Bicriteria online matching: Maximizing weight and cardinality.
\newblock In {\em 9th International Conference on Web and Internet Economics
  ({WINE})}, volume 8289, pages 305--318. Springer, 2013.

\bibitem{DBLP:conf/soda/ManshadiGS11}
Vahideh~H. Manshadi, Shayan~Oveis Gharan, and Amin Saberi.
\newblock Online stochastic matching: Online actions based on offline
  statistics.
\newblock In {\em Proceedings of the 22nd Annual {ACM-SIAM} Symposium on
  Discrete Algorithms ({SODA})}, pages 1285--1294. {SIAM}, 2011.

\bibitem{DBLP:journals/jacm/MehtaSVV07}
Aranyak Mehta, Amin Saberi, Umesh~V. Vazirani, and Vijay~V. Vazirani.
\newblock Adwords and generalized online matching.
\newblock {\em Journal of the {ACM}}, 54(5):22, 2007.

\bibitem{DBLP:conf/soda/MehtaWZ15}
Aranyak Mehta, Bo~Waggoner, and Morteza Zadimoghaddam.
\newblock Online stochastic matching with unequal probabilities.
\newblock In {\em Proceedings of the 26th Annual {ACM-SIAM} Symposium on
  Discrete Algorithms ({SODA})}, pages 1388--1404. {SIAM}, 2015.

\bibitem{DBLP:conf/soda/MeyersonNP06}
Adam Meyerson, Akash Nanavati, and Laura~J. Poplawski.
\newblock Randomized online algorithms for minimum metric bipartite matching.
\newblock In {\em Proceedings of the 17th Annual {ACM-SIAM} Symposium on
  Discrete Algorithms ({SODA})}, pages 954--959. {ACM} Press, 2006.

\bibitem{DBLP:conf/approx/Raghvendra16}
Sharath Raghvendra.
\newblock A robust and optimal online algorithm for minimum metric bipartite
  matching.
\newblock In {\em Approximation, Randomization, and Combinatorial Optimization.
  Algorithms and Techniques ({APPROX/RANDOM})}, volume~60, pages 18:1--18:16.
  Schloss Dagstuhl - Leibniz-Zentrum f{\"{u}}r Informatik, 2016.

\bibitem{DBLP:conf/icalp/WangW15}
Yajun Wang and Sam~Chiu{-}wai Wong.
\newblock Two-sided online bipartite matching and vertex cover: Beating the
  greedy algorithm.
\newblock In {\em 42nd International Colloquium on Automata, Languages, and
  Programming ({ICALP})}, volume 9134, pages 1070--1081. Springer, 2015.

\end{thebibliography}

\appendix
\section*{Appendix}
\section{Deferred Proofs} \label{sec:app-defer}
\begin{observation}\label{obs:appendix}
For the maximum-cardinality online bipartite matching problem under vertex arrivals with a hard budget $k > 4$, there exists an instance where the matching output by the given algorithm contains an augmenting path of length no greater than $k - 1$.
\end{observation}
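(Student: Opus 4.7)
The plan is to exhibit, for each even $k\ge 6$ (recall that the paper assumes $k$ is even and $k>4$), a bipartite graph together with an arrival order and a consistent sequence of ``arbitrary'' augmenting-path choices that forces the variant of the given algorithm---one that picks \emph{any} augmenting path starting from the newly-arrived vertex and containing at most $k$ vertices, rather than a shortest one---to produce a matching whose graph admits an augmenting path of length at most $k-1$.

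Take $R=\{r_0,r_1,\ldots,r_{k/2},r^*\}$ and process $L=\{u_1,\ldots,u_{k/2+2}\}$ in index order. In Stage~1 (times $i=1,\ldots,k/2$), let $u_i$ be adjacent to exactly $\{r_{i-1},r_i\}$; the only augmenting paths from $u_i$ are the two length-$1$ edges $(u_i,r_{i-1})$ and $(u_i,r_i)$, and the arbitrary algorithm may choose $(u_i,r_{i-1})$. After Stage~1 the matching is $\{(u_i,r_{i-1}):1\le i\le k/2\}$, and the exposed $R$-vertices are $r_{k/2}$ and $r^*$. In Stage~2 (time $k/2+1$), let $u_{k/2+1}$ be adjacent only to $r_0$; its unique augmenting path $\langle u_{k/2+1},r_0,u_1,r_1,u_2,\ldots,u_{k/2},r_{k/2}\rangle$ has length $k+1$, which exceeds the budget, so the algorithm does nothing and $u_{k/2+1}$ stays exposed. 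In Stage~3 (time $k/2+2$), let $u_{k/2+2}$ be adjacent to $\{r_1,r^*\}$; the only augmenting paths from $u_{k/2+2}$ are $\langle u_{k/2+2},r^*\rangle$ of length $1$ and $\langle u_{k/2+2},r_1,u_2,r_2,\ldots,u_{k/2},r_{k/2}\rangle$ of length $k-1$, and the arbitrary algorithm may choose the latter.

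After this final augmentation, the matching becomes $\{(u_1,r_0),(u_{k/2+2},r_1)\}\cup\{(u_i,r_i):2\le i\le k/2\}$, and a direct check of edge membership shows that $\langle u_{k/2+1},r_0,u_1,r_1,u_{k/2+2},r^*\rangle$ alternates correctly between non-matching and matching edges with both endpoints exposed; it is therefore an augmenting path of length $5$, which is at most $k-1$ since $k\ge 6$.

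The main design constraint is a trade-off in the chain built in Stage~1: it must be long enough that in Stage~2 the only escape from $u_{k/2+1}$ exceeds the budget, but the shortcut obtained by regrafting $r_1$'s partner onto the fresh $u_{k/2+2}$ in Stage~3 must leave a length-$5$ alternating walk from $u_{k/2+1}$ to the still-exposed $r^*$. Chain lengths $k+1$ and $k-1$ are chosen so that both conditions hold simultaneously, and the fact that the residual augmenting path has length exactly $5$ independently of $k$ is what makes the invariant breakable for every even $k>4$.
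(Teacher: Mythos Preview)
Your construction is correct and is essentially the same as the paper's: both build a length-$k$ alternating chain in Stage~1, introduce in Stage~2 a vertex whose unique augmenting path has length $k+1$ (so the algorithm is forced to skip it), and then in Stage~3 introduce a vertex with two available augmenting paths of lengths $1$ and $k-1$, letting the arbitrary-choice variant pick the long one; the resulting matching then admits a length-$5$ augmenting path through the skipped vertex. The only difference is cosmetic: the paper orients its chain so that $u_t$ is matched ``forward'' to $v_{t+1}$ (with $v_1$ left exposed), whereas you match ``backward'' to $r_{i-1}$ (leaving $r_{k/2}$ exposed), and your indexing starts at $0$ rather than $1$---under the relabeling $r_j\leftrightarrow v_{\ell-j}$, $r^*\leftrightarrow v_{\ell+1}$ the two instances coincide.
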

\begin{proof}
Let $\ell := k / 2 + 1$. Consider the following scenario with $\ell+1$ timesteps.
Let $u_t$ denote the vertex that arrives at time $t$, and $M_t$ be the matching that the algorithm outputs at the end of time $t$. We will number the vertices in $R$ as follows: $R:=\{v_1,\ldots,v_{\ell+1}\}$.
\begin{enumerate}
\item For each time $t=1,\ldots, \ell - 1$, edges $(u_t, v_t)$ and $(u_t, v_{t + 1})$ are revealed.
Assume the algorithm matches $u_t$ to $v_{t + 1}$ at each time.
\item At time $\ell$, edge $(u_\ell, v_\ell)$ is revealed.
Observe that the algorithm cannot match $u_\ell$ since the only augmenting path starting with $u_\ell$ with respect to $M_{\ell - 1}$ has length $2 \ell - 1 > k - 1$.
We thus have $M_\ell = M_{\ell - 1}$.
\item At time $\ell+1$, edges $(u_{\ell + 1}, v_{\ell - 1})$ and $(u_{\ell + 1}, v_{\ell + 1})$ are revealed.
Suppose the algorithm chooses an augmenting path $\langle u_{\ell + 1}, v_{\ell - 1}, u_{\ell - 2}, v_{\ell - 2}, u_{\ell - 3}, \cdots, v_2, u_1, v_1 \rangle$ of length $2 \ell - 3 = k - 1$. Note that  $(u_t, v_{t + 1}) \in M_{\ell}$ for $1 \leq t \leq \ell - 1$. 
\end{enumerate}
We can now verify that there is an augmenting path $\langle u_\ell, v_\ell, u_{\ell - 1}, v_{\ell - 1}, u_{\ell + 1}, v_{\ell + 1} \rangle$ with respect to $M_{\ell + 1}$ of length 5.
Recall that $k - 1 \geq 5$ since $k$ is an even number greater than $4$.
\end{proof}

\begingroup
\def\thetheorem{\ref{analysis-cardinality:edge-arrival}}
\begin{theorem} [restated]
The given algorithm is a $(1 - \frac{2}{k + 2})$-competitive algorithm for the maximum-cardinality online matching problem under edge arrivals with a hard budget of $k$ on the number of reassignments.
\end{theorem}
\addtocounter{theorem}{-1}
\endgroup

\begin{proof}
We claim that the length of every augmenting path is at least $k + 1$ at the end of each timestep.
The conclusion then follows from Lemma~\ref{prelim:lem-long-aug}.

Use induction on time $t$.
We define the matching at the end of ``time 0'' as the empty set; then the base case is trivial.
Fix a timestep $t\geq 1$, and let us define $G_{t-1}$ and $G_t$ be the graph at time $t-1$ and $t$, respectively.
Let $M_{t-1}$ denote the matching that is output by the algorithm at the end of time $t-1$, and $(u, v)$ be the edge that arrives at time $t$.
The length of any augmenting path in $G_{t-1}$ with respect to $M_{t-1}$ is at least $k + 1$ from the induction hypothesis.

If the algorithm does nothing at time $t$, this happens because every augmenting path in $G_t$ containing $(u, v)$ with respect to $M_{t - 1}$ has length at least $k + 1$.
Suppose towards contradiction that there is an augmenting path in $G_t$ with respect to $M_{t - 1}$ of length less than $k + 1$. Then this path does not contain $(u, v)$ and is an augmenting path in $G_{t - 1}$, contradicting the induction hypothesis.

The remaining case is when the algorithm uses a shortest augmenting path $P$ to output a new matching $M_{t-1} \triangle P$ at the end of  time $t$.
Let $Q$ be an arbitrary augmenting path with respect to $M_{t-1} \triangle P$.
Let $X$ and $Y$ be the augmenting paths in $G_{t}$ with respect to $M_{t-1}$ satisfying the conditions of Lemma~\ref{keylemma:main}.
Due to Condition~\ref{keylemma:cond-disjoint}, at most one of $X$ and $Y$ contains $(u, v)$.
Suppose without loss of generality that $(u, v) \notin Y$.
This implies that $Y$ is an augmenting path in $G_{t-1}$ with respect to $M_{t-1}$, yielding $|Y| \geq k + 1$.
On the other hand, no matter whether $X$ contains $(u, v)$ or not, we have $|X| \geq |P|$ since $P$ is a shortest augmenting path containing $(u, v)$ whose length is no more than $k - 1$.
By Condition~\ref{keylemma:cond-contained} of Lemma~\ref{keylemma:main}, we have $|P| + |Q| \geq |X| + |Y| \geq |P| + k + 1$, completing the proof.
\end{proof}

\end{document}